%% file: main.tex
\documentclass[preprint,12pt]{elsarticle}

\usepackage{amsmath}
\usepackage{amssymb}
\usepackage{amsthm}
\usepackage{mathrsfs}
\usepackage{graphicx}
\usepackage{hyperref}
\usepackage{lineno}
\usepackage{caption}
\usepackage{subcaption}
\usepackage{booktabs}
\usepackage{multirow}
\usepackage[inline]{enumitem}

\newtheorem{lemma}{Lemma}

\newtheorem{remark}{Remark}

\input{preamble.tex}

\makeatletter
\def\ps@pprintTitle{%
  \let\@oddhead\@empty
  \let\@evenhead\@empty
  \def\@oddfoot{\reset@font\hfil\thepage\hfil}
  \let\@evenfoot\@oddfoot
}
\makeatother

\begin{document}
\begin{frontmatter}

\title{Diagonalization of the Landau Hamiltonian with a Periodic Potential via a Galerkin Projection Method and Applications to Topological Band Properties}
\author[1]{Rafael Antonio Lainez Reyes\corref{cor1}}
\ead{rafael-antonio.lainez-reyes@mathematik.uni-stuttgart.de}
\cortext[cor1]{Corresponding author}

\author[1]{Benjamin Stamm}
\ead{benjamin.stamm@mathematik.uni-stuttgart.de}

\affiliation[1]{organization={Institute of Applied Analysis and Numerical Simulation, University of Stuttgart},
            addressline={Pfaffenwaldring 57},
            city={Stuttgart},
            postcode={70569},
            state={Baden-Württemberg},
            country={Germany}}

\author[2]{Hans Peter Büchler}
\ead{buechler@theo3.physik.uni-stuttgart.de}
\affiliation[2]{organization={Institute for Theoretical Physics III, University of Stuttgart},
            addressline={Pfaffenwaldring 57},
            city={Stuttgart},
            postcode={70569},
            state={Baden-Württemberg},
            country={Germany}}

\begin{abstract}
In this paper, we study the numerical computation of eigenfunctions of the Landau Hamiltonian with a periodic potential. We propose a spectral method using the eigenfunctions of the Landau operator that allows us to numerically compute the band structure for potentials of different strengths, including the well-studied weak and strong potentials, and additionally systems in the intermediate regime. We then apply the method to simulations across the full range of coupling constants from weak to strong coupling and present an analysis of the band structure as a function of the potential strength. 
\end{abstract}

\begin{highlights}
\item Spectral method for computing eigenfunctions of the Landau Hamiltonian with a periodic potential.
\item Numerical computations of the band structure for weak, strong, and intermediate potential strengths. 
\item Analysis of the band structure as a function of the potential strength.
\end{highlights}

\begin{keyword}
Landau Hamiltonian \sep Magnetic translation operators \sep Galerkin projection \sep Chern number \sep Topological band structure \sep Rational magnetic flux
\end{keyword}

\end{frontmatter}

\input{Introduction.tex}
\input{Theory.tex}
\input{Numerics.tex}
\input{Results.tex}
\input{Conclusions.tex}
\FloatBarrier
\input{Data_availability_acknowledgements.tex}

\bibliographystyle{elsarticle-num} 
\bibliography{references}

\end{document}

%% file: preamble.tex
\usepackage{bm}
\usepackage{comment}
\usepackage{adjustbox}
\usepackage{xspace}
\usepackage[colorinlistoftodos]{todonotes}
\usepackage{pdfpages}
\usepackage{xparse}
\usepackage{placeins}
\usepackage{ifthen} 
\usepackage{dsfont} 
\usepackage{xcolor} 

\newcommand{\bbC}{\mathbb{C}}
\newcommand{\bbR}{\mathbb{R}}
\newcommand{\bbN}{\mathbb{N}}

\newcommand{\bbZ}{\mathbb{Z}}

\newcommand{\mcC}{\mathcal{C}}

\newcommand{\mcL}{\mathcal{L}}

\newcommand{\mcV}{\mathcal{V}}

\newcommand{\mcX}{\mathcal{X}}

\newcommand{\bfa}{\mathbf{a}}

\newcommand{\bfc}{\mathbf{c}}
\newcommand{\bfe}{\mathbf{e}}
\newcommand{\bfk}{\mathbf{k}}
\newcommand{\bfl}{{\bm{\ell}}}

\newcommand{\bfx}{\mathbf{x}}

\newcommand{\bfxi}{{\bm{\xi}}}

\newcommand{\bfA}{\mathbf{A}}

\newcommand{\bfH}{\mathbf{H}}

\newcommand{\bfS}{\mathbf{S}}

\newcommand{\iu}{\mathrm{i}}

\newcommand{\inpro}[2]{\langle#1,#2\rangle}

\newcommand{\norm}[1]{\lVert#1\rVert}
\newcommand{\set}[1]{\left\lbrace#1\right\rbrace}

\newcommand{\conjz}{\overline{z}}

\newcommand{\Cnumber}{\mathcal{C}}

\newcommand{\smoothFunctions}[3][]{\ifthenelse{\equal{#1}{}}{\mathcal{C}^{#2}}{\mathcal{C}_{#1}^{#2}}(#3)}

\newcommand{\dist}[1][]{\ifthenelse{\equal{#1}{}}{\mathbb{D}}{#1_{\mathbb{D}}}}



%% file: Introduction.tex
\section{Introduction\label{sec:Introduction}}

Electronic systems in the presence of magnetic fields exhibit a wide range of interesting phenomena. A system subject to a uniform perpendicular magnetic field admits a discrete set of energy levels known as Landau levels~\cite{landauDiamagnetismusMetalle1930}, which, in the absence of an external potential, are infinitely degenerate. We consider such a system, subject to a periodic potential $V(\bfx)$. While this problem is of fundamental interest as a paradigmatic quantum mechanical problem, it has recently attracted increased interest due to experiments with cold atomic gases in rotating traps, which give rise to an effective magnetic field \cite{Zwierlein2021,Zwierlein2024}, as well as twisted bilayer graphene \cite{Bernevig2022}. For simplicity, we focus on a square lattice with period $\bfa=(a,a)\in \bbR^2$, inducing a Bravais lattice $\mcL=a\bbZ\times a\bbZ$ with a corresponding unit cell $\Omega$. The generalization to arbitrary lattices is straightforward. Two limiting cases of this system are well-studied: the tight-binding limit, corresponding to a strong external potential relative to the magnetic field, which yields the famous fractal spectrum known as Hofstadter's butterfly~\cite{hofstadterEnergyLevelsWave1976} (see also~\cite{fuchsLandauLevelsQuasicrystals2018}); and the weak-potential limit, corresponding to a strong magnetic field relative to the external potential, often explored via perturbative approaches~\cite{pfannkucheTheoryMagnetotransportTwodimensional1992, fuchsLandauLevelBroadening2019}. Already in these limiting cases, the system displays a rich and complex band structure. 

The system introduced above is governed by the Hamiltonian
\begin{equation}
    H = L(\bfA) + V(\bfx) = \frac{1}{2}(-\iu\nabla -\bfA(\bfx))^2 + V(\bfx), \quad \bfx \in \bbR^2\label{eq: Hamiltonian},
\end{equation}
where $\bfA(\bfx) = \frac{B}{2}[-x_2, x_1]$ is the magnetic vector potential in the symmetric gauge, inducing a constant magnetic field $B\bfe_3$ perpendicular to the $xy$-plane. Here, we express energies by the characteristic kinetic energy within the lattice $E_r = \hbar^2/m a^2$, lengths in terms of the lattice period $a$, and introduce an effective magnetic field $B$. For particles with charge $q$, it is related to the microscopic magnetic field $\mathcal{B}$ via $B= q \mathcal{B} a^2/\hbar $, while for particles in a rotating trap with rotation frequency $\omega$, it reduces to $B= 2 m \omega a^2/\hbar$. From now on, we will work using a unit lattice period, that is, $a=1$. The generalization to arbitrary lattice periods is straightforward.

It is known that $H$ does not commute with the standard translation operators $\set{T_\bfl}_{\bfl\in\mcL}$ associated with the lattice $\mcL$. This precludes the direct application of standard Bloch theory~\cite{ashcroftSolidStatePhysics1976, reedmichaelMethodsModernMathematical1972}. However, under the \textit{integer flux condition}, where the magnetic flux over a unit cell $\Omega$ satisfies $\int_{\Omega}\nabla\times \bfA(\bfx)\,\mathrm{d}\bfx = 2\pi \beta$ for $\beta \in \bbZ$, an extension of Bloch theory pioneered by Zak~\cite{zakMagneticTranslationGroup1964} can be applied. In cases of rational flux $\beta = p/q$, the theory remains applicable through a supercell formulation on a modified lattice $\mathcal{L}'$. A rigorous mathematical description of magnetic Bloch theory can be found in the works of Iwatsuka~\cite{iwatsukaSchrodingerOperatorsMagnetic1990} and Yoshitomi~\cite{yoshitomiSchrodingerOperatorsPeriodic1997}.

This magnetic Bloch theory allows the characterization of $H$, which is defined over $\bbR^2$, by studying a family of Hamiltonians parametrized over the Brillouin zone $\Omega^*$, namely $\{H_\bfk\}_{\bfk\in\Omega^*}$, where each $H_\bfk$ is defined over the bounded unit cell $\Omega$. Determining the resulting band structure is thus equivalent to solving a parametrized eigenvalue problem for $H_\bfk$ across the Brillouin zone. 

While the previously introduced limiting regimes, which are in a sense equivalent~\cite{langbeinTightBindingNearlyFreeElectronApproach1969}, are well-studied in the literature, the intermediate regime, where neither the magnetic field nor the external potential dominates, remains comparatively unexplored. The main complication in the intermediate regime is that the coupling between different Landau levels is no longer negligible, preventing the application of methods based on perturbation theory. In the current work, we propose a numerical method inspired by the plane-wave methods used in Kohn-Sham systems~\cite{cancesNumericalAnalysisPlanewave2012}. Our approach employs a Galerkin framework to approximate the solutions of the parametrized eigenvalue problem. No assumptions on the strength of the potential or the magnetic field are necessary, which allows us to treat the full range of parameters, encompassing the weak-potential and tight-binding limits as well as the intermediate regime. 

The organization of the paper is as follows: in Section~\ref{sec:Theory}, we introduce the previously mentioned parametrized eigenvalue problem. Section~\ref{sec:Numerics} focuses on the Galerkin strategy for the efficient computation of matrix elements, and in Section~\ref{sec: Results}, we explore the resulting band structures.

%% file: Theory.tex
\section{Background\label{sec:Theory}}
    In this section, we briefly review Bloch theory applied to the Landau Hamiltonian. Since these results are well known in the literature, we limit the exposition and refer the reader to~\cite{kohmotoTopologicalInvariantQuantization} and references therein. The building blocks of magnetic Bloch theory are the magnetic translation operators, which we now introduce. Given a lattice vector $\bfl=(\ell_1,\ell_2)\in\mcL$, let $T_\bfl$ be a standard translation operator, that is
    \begin{equation}
        T_\bfl\psi(\bfx)=\psi(\bfx-\bfl).
    \end{equation}
    The magnetic translation operator $S_\bfl$ is defined in terms of the standard translation operator $T_\bfl$ as
    \begin{equation}
        S_\bfl\psi(\bfx)=\exp\left(-\iu B\frac{-\ell_1x_2+x_1\ell_2}{2}\right)T_\bfl\psi(\bfx)\label{eq: Magnetic Translation Operator}.
    \end{equation}
    It is easy to verify that $H(\bfA)S_\bfl=S_\bfl H(\bfA)$ and $L(\bfA)S_\bfl=S_\bfl L(\bfA)$ as operators; however, the only case where $S_{\bfl_1}S_{\bfl_2}=S_{\bfl_2}S_{\bfl_1}$ for any $\bfl_1,\bfl_2\in\mcL$ is when the integer (or rational by extension) flux condition is satisfied.  
    Therefore, from now on, we will assume the integer flux condition is satisfied. The extension to the rational flux case can be carried out without major modifications. Under this assumption, the study of the eigenfunctions of the Landau Hamiltonian reduces to an eigenvalue problem over a unit cell, parametrized over reciprocal unit cells $\Omega$ and $\Omega^*$. For convenience, we choose
    \begin{align}
        \Omega&=[-1,0]\times[-1,0],\\
        \Omega^*&=\left[-\pi, \pi\right]\times\left[-\pi, \pi\right].
    \end{align}
    The parametrized eigenvalue problem in question is
    \begin{align}
        \label{eq: Eigenvalue Problem at k}
        (L(\bfA)+V(\bfx)) \psi_{m,\bfk}(\bfx) & = \lambda_{m,\bfk}\psi_{m,\bfk}(\bfx),\\
        \label{eq:Magnetic boundary condition}
        S_{\bfl}\psi_{m,\bfk}(\bfx) & = \exp(-\iu \bfl\cdot \bfk)\psi_{m,\bfk}(\bfx),
    \end{align}
    where $\bfx\in\Omega,\; \bfk\in\Omega^*,\; \bfl\in\mcL$, and $m\in \bbN_0$ corresponds to the energy level. In order to solve this problem numerically, we need to introduce a basis which satisfies the boundary condition in Eq.~\eqref{eq:Magnetic boundary condition} induced by the magnetic translations, which we will refer to as the magnetic boundary condition. Although it is not strictly necessary, for reasons explained below, an orthonormal basis is desirable.

    \subsection{Eigenfunctions of the Magnetic Translations\label{subsec: Eigenfunctions magnetic Translations}}
        In this section, we recall the construction of the magnetic translation eigenfunctions as presented in \cite{hoW_inftySl_q2Algebras1995} and \cite{dereliBlochWavesNoncommutative2021}. These eigenfunctions will be the building blocks for the numerical method we present in the current work. We begin by obtaining a set of eigenfunctions for the Landau operator on the plane. In order to do so, it is convenient to introduce complex variables. Therefore, for a given $\bfx=(x_1,x_2)\in\bbR^2$, we define
        \begin{equation}
            z = x_1+\iu x_2,\quad \conjz=x_1-\iu x_2.
        \end{equation}
        Associated with $z$ and $\conjz$, we have the Wirtinger derivative operators:
        \begin{align}
            \partial_z &=\frac{1}{2}(\partial_{x_1}-\iu\partial_{x_2})\\
            \partial_{\conjz} &=\frac{1}{2}(\partial_{x_1}+\iu\partial_{x_2}).
        \end{align}
        Using the Wirtinger operators, we can now define the raising and lowering operators
        \begin{align}
            L^+&=-\sqrt{\frac{2}{B}}\partial_z+\frac{1}{2}\sqrt{\frac{B}{2}}\conjz\label{op: creation operator in complex}\\
            L^-&=\sqrt{\frac{2}{B}}\partial_{\conjz}+\frac{1}{2}\sqrt{\frac{B}{2}}z.\label{op: annihilation operator in complex}
        \end{align}
        As expected, the raising and lowering operators allow us to express the Landau operator in a rather convenient way; it is straightforward to verify that
        \begin{equation}
            L(\bfA)=B\left(L^+L^-+\frac{1}{2}\right).
        \end{equation}
        This is a harmonic oscillator, and by standard analysis, we obtain the family of normalized ground-state eigenfunctions,
        \begin{equation}
            \phi_0(z,\conjz)=C F(z)\exp\left(-\frac{B}{4}z\conjz\right)\label{eq: General ground state},
        \end{equation}
        where $F$ is any analytic function and $C$ a normalization constant to be determined, and the corresponding infinitely degenerate eigenvalue is
        \begin{equation}
            E_0 = \frac{B}{2}.
        \end{equation}
        Once the ground states are known, the normalized excited states can be obtained by repeatedly applying the raising operator:
        \begin{equation}\label{eq: Landau Operator in Plane Eigenfunction}
            \phi_m(z,\conjz)= (L^+)^m\phi_0(z,\conjz),
        \end{equation}
        with corresponding energy 
        \begin{equation}
            E_m = B\left(m+\frac{1}{2}\right),\quad m\in\bbN_0.
        \end{equation}
        Since $(L^+)^\dagger=L^-$ and $L^-\phi_{m+1}=(m+1)\phi_m$, it is easy to show that the degeneracy from the ground state propagates to the excited states. Let us now impose the magnetic boundary conditions from Eq.~\eqref{eq:Magnetic boundary condition} over the eigenfunctions of the Landau operator on the plane.
        
        The magnetic translations commute with the Landau operator $L(\bfA)$, therefore they must share a common basis of eigenfunctions. All the eigenfunctions of the Landau operator $L(\bfA)$ are of the form of Eq.~\eqref{eq: Landau Operator in Plane Eigenfunction}, so we are left to choose $F$ in Eq.~\eqref{eq: General ground state} so that $\phi_0$ is an eigenfunction of the magnetic translation operators. The resulting normalized ground-state eigenfunctions are
        \begin{align}
            \phi_{0,\alpha,\beta, \bfk}(z,\conjz)&=N_{k_1}\exp\left(\frac{B}{4}z^2+\iu k_1z-\frac{B}{4}z\conjz\right)\hat\Theta_{\alpha,\beta}(z-\delta(\bfk)),\label{eq: Landau Eigenfunctions at Ground State}\\
            N_{k_1}&=\left(\frac{B}{\pi}\right)^{1/4}\exp\left(-\frac{k_1^2}{2B}\right),\\
            \alpha&\in \set{0,1,\ldots,\beta-1},\\
            \beta&=\frac{B}{2\pi},\\
            \delta(\bfk)&=\frac{k_2-\iu k_1}{B}, \quad\bfk = (k_1,k_2)\in\Omega^*,
        \end{align}
        where $N_{k_1}$ is the normalization constant derived in Lemma~\ref{lem: Normalization Constant} below. The function $\hat\Theta_{\alpha, \beta}$ is a Theta function~\cite{mumfordTataLecturesTheta2007}, given by the series expansion:
        \begin{equation}
            \hat\Theta_{\alpha, \beta}(z)=\sum_{n\in\bbZ}\exp(-\pi \beta(n+\alpha/\beta)^2)\exp(2\pi \iu \beta(n+\alpha/\beta)z).
        \end{equation} 
        The constant $\beta$ is the magnetic flux per unit cell and, by assumption, is an integer. Regarding $\alpha$, for the proposed function to be an eigenfunction,
        \begin{equation}
            \beta\left(n+\frac{\alpha}{\beta}\right)\in\bbZ, \quad \forall n\in\bbZ,
        \end{equation}
        is a necessary condition. This implies $\alpha+n\beta\in \bbZ$ for all $n\in\bbZ$. In other words, $\alpha$ is an integer modulo $\beta$. For simplicity, we choose $\alpha\in\set{0,1,\ldots, \beta-1}$. Translating $\alpha$ by an integer multiple of $\beta$ does not result in a new eigenfunction. On the other hand, for given $\alpha_1,\alpha_2\in \bbZ$ with $0\leq \alpha_1,\alpha_2\leq \beta-1$ and $\alpha_1\neq \alpha_2$, the resulting ground-state functions $\phi_{0,\alpha_1,\beta, \bfk}$ and $\phi_{0,\alpha_2,\beta, \bfk}$ are linearly independent. In contrast to the Landau operator on the plane, the ground state of the Landau operator satisfying the magnetic boundary conditions has a finite degeneracy of degree $\beta$. More can be said about the set $\set{\phi_{0,\alpha,\beta, \bfk}:\; \alpha\in\set{0,1,\ldots,\beta-1}}$.
        \begin{lemma}\label{lem: Normalization Constant}
            The set of functions $\set{\phi_{m,\alpha,\beta,\bfk}:\; m\in\bbN_0,\; \alpha\in\set{0,1,\ldots,\beta-1}}$ defined in Eq.~\eqref{eq: Landau Eigenfunctions at Ground State} is such that each $|\phi_{0,\alpha,\beta, \bfk}(\bfx)|^2$ is periodic with respect to the lattice $\mcL$. As a consequence, it allows us to sensibly define the $L^2$-norm over the unit cell $\Omega$. Moreover, the following properties hold:
            \begin{equation}
                \norm{\phi_{0,\alpha,\beta, \bfk}}_{L^2(\Omega)}^2 = \int_\Omega |\phi_{0,\alpha,\beta, \bfk}(\bfx)|^2 \,\mathrm{d}\bfx =  N^2_{k_1} \sqrt{\frac{\pi}{B}}\exp\left(\frac{k_1^2}{B}\right) =1, \label{eq: normalization constant}
            \end{equation}
            and for $\alpha'\neq \alpha$, 
            \begin{equation}
                \int_\Omega \overline{\phi}_{0,\alpha',\beta, \bfk}(\bfx) \phi_{0,\alpha,\beta, \bfk}(\bfx) \,\mathrm{d}\bfx = 0, \label{eq: orthogonality}
            \end{equation}
            that is, the set is orthogonal in $L^2(\Omega)$.
        \end{lemma}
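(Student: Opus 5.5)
The plan is to compute $|\phi_{0,\alpha,\beta,\bfk}|^2$ and the cross product $\overline{\phi}_{0,\alpha',\beta,\bfk}\phi_{0,\alpha,\beta,\bfk}$ explicitly in the real variables $x_1,x_2$. Each will be written as a double series over the Theta-function indices. Periodicity will then follow by reindexing, orthogonality from the $x_1$-integration, and the normalization by unfolding the sum over $n$ into a Gaussian integral over $\bbR$. Only the $m=0$ functions of Eq.~\eqref{eq: Landau Eigenfunctions at Ground State} enter the claims, so I restrict to those.

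\textbf{Step 1 (prefactor).} With $z=x_1+\iu x_2$, the real part of the exponent $\frac{B}{4}z^2+\iu k_1 z-\frac{B}{4}z\conjz$ is
\begin{equation*}
\tfrac{B}{4}(x_1^2-x_2^2)-k_1x_2-\tfrac{B}{4}(x_1^2+x_2^2) = -\tfrac{B}{2}x_2^2-k_1x_2 .
\end{equation*}
The modulus squared of the prefactor is therefore $N_{k_1}^2\exp(-Bx_2^2-2k_1x_2)$, independent of $x_1$. The imaginary part of the exponent cancels in $|\cdot|^2$, and in the cross product it cancels as well, since both functions carry the same $\bfk$.

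\textbf{Step 2 (Theta factor).} Set $\mu_n=n+\alpha/\beta$, so that $\beta\mu_n\in\bbZ$. We have $\operatorname{Re}(z-\delta(\bfk))=x_1-k_2/B$ and $\operatorname{Im}(z-\delta(\bfk))=x_2+k_1/B$. Using $B=2\pi\beta$, the $n$-th term of $\hat\Theta_{\alpha,\beta}(z-\delta(\bfk))$ equals
\begin{equation*}
\exp\!\big(-\tfrac{B}{2}\mu_n^2 - B\mu_n x_2 - k_1\mu_n\big)\,\exp\!\big(2\pi\iu\beta\mu_n(x_1-k_2/B)\big).
\end{equation*}
Multiplying the $n$-th term for $\alpha$ by the conjugate of the $n'$-th term for $\alpha'$ (index $\mu'_{n'}$), and including Step 1, the real exponent becomes
\begin{equation*}
-\tfrac{B}{2}(x_2+\mu_n)^2-\tfrac{B}{2}(x_2+\mu'_{n'})^2-k_1(x_2+\mu_n)-k_1(x_2+\mu'_{n'}),
\end{equation*}
up to a check of constants. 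The phase is $\exp\!\big(2\pi\iu(\beta\mu_n-\beta\mu'_{n'})(x_1-k_2/B)\big)$.

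\textbf{Step 3 (periodicity).} The phase is $1$-periodic in $x_1$ because $\beta\mu_n-\beta\mu'_{n'}\in\bbZ$. Shifting $x_2\mapsto x_2+1$ amounts to $\mu_n\mapsto\mu_{n+1}$ and $\mu'_{n'}\mapsto\mu'_{n'+1}$. This leaves the phase unchanged, so reindexing $n,n'$ shows the double sum, and in particular $|\phi_{0,\alpha,\beta,\bfk}|^2$, is $\mcL$-periodic. Absolute convergence of the Gaussian-weighted series justifies the rearrangement and the termwise integration below.

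\textbf{Step 4 (integration over $\Omega$).} Integrating over $x_1\in[-1,0]$ kills every term with $\beta\mu_n\neq\beta\mu'_{n'}$. If $\alpha\neq\alpha'$, then $\beta\mu_n-\beta\mu'_{n'}\equiv\alpha-\alpha'\not\equiv0 \pmod\beta$, so all terms vanish, which gives Eq.~\eqref{eq: orthogonality}. If $\alpha=\alpha'$, only the diagonal $n=n'$ survives, and we are left with
\begin{equation*}
N_{k_1}^2\sum_{n\in\bbZ}\int_{-1}^0 \exp\!\big(-B(x_2+\mu_n)^2-2k_1(x_2+\mu_n)\big)\,\mathrm{d}x_2 .
\end{equation*}
Since $\mu_n$ ranges over $\alpha/\beta+\bbZ$, the intervals $[-1,0]+\mu_n$ tile $\bbR$. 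The sum therefore unfolds to
\begin{equation*}
\int_\bbR e^{-By^2-2k_1y}\,\mathrm{d}y=\sqrt{\pi/B}\,e^{k_1^2/B}.
\end{equation*}
This yields Eq.~\eqref{eq: normalization constant}, and the stated $N_{k_1}$ makes it equal to $1$.

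The main obstacle is Step 2. The bookkeeping of the real parts must combine exactly into completed squares $(x_2+\mu_n)^2$, with the $k_1$-terms lining up, and I would verify this carefully using $B=2\pi\beta$ and the definition of $\delta(\bfk)$.
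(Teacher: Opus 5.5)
Your proof is correct and follows essentially the same route as the paper. Both arguments expand $\overline{\phi}_{0,\alpha',\beta,\bfk}\phi_{0,\alpha,\beta,\bfk}$ as a double theta series over the common factor $N_{k_1}^2 e^{-Bx_2^2-2k_1x_2}$, let the $x_1$-integration remove every off-diagonal term, and unfold the diagonal sum over $n$ into one Gaussian integral over $\bbR$. Your Step 2 exponent is in fact exact, so the ``check of constants'' is not needed, and your Step 3 proves the $\mcL$-periodicity of $|\phi_{0,\alpha,\beta,\bfk}|^2$ explicitly via the reindexing $(n,n')\mapsto(n+1,n'+1)$, which the paper states but does not prove.
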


        \begin{proof}
            Let $\gamma_n(\alpha)=n+\alpha/\beta$, and analogously define $\gamma_{n'}(\alpha')$. Straightforward computations show
            \begin{align}
                &\overline{\phi}_{0,\alpha',\beta, \bfk}(\bfx) \phi_{0,\alpha,\beta, \bfk}(\bfx)/N_{k_1}^2\\= &  \exp(-Bx_2^2-2k_1 x_2) \sum_{n,n'\in \bbZ} C_{n,n'} F_{n,n'}(x_1) G_{n,n'}(x_2) H_{n,n'}(\bfk) 
            \end{align}
            with
            \begin{align}
                C_{n,n'} &= \exp\left(-\pi\beta(\gamma_n(\alpha')^2+\gamma_{n'}(\alpha)^2)\right)\label{eq: phi phi conjugate for orthogonality} \\
                F_{n,n'}(x_1) &= \exp\left(-2\pi \beta\iu (\gamma_n(\alpha')-\gamma_{n'}(\alpha)) x_1\right) \\
                G_{n,n'}(x_2) &= \exp\left(-2\pi \beta (\gamma_n(\alpha')+\gamma_{n'}(\alpha)) x_2\right) \\
                H_{n,n'}(\bfk) &= \exp\left(-(\gamma_n(\alpha')+\gamma_{n'}(\alpha))k_1\right) \exp\left(\iu  (\gamma_n(\alpha')-\gamma_{n'}(\alpha))k_2\right).
            \end{align} 
            Now we integrate over $\Omega$. Since the sums are uniformly convergent, we can exchange them with the integral. Ultimately, the integrand becomes separable. In particular, the integration over $x_1$ yields:
            \begin{equation}
                \int_{-1}^0 F_{n,n'}(x_1) \,\mathrm{d}x_1 = 
                \begin{cases} 
                    1 & \text{if } n=n' \text{ and } \alpha=\alpha', \\ 
                    0 & \text{otherwise.}
                \end{cases}
            \end{equation}
            This implies the integral is zero whenever $\alpha \neq \alpha'$, giving the desired orthogonality property.
            
            Let us now compute the norm; therefore, $\alpha = \alpha'$ and $n = n'$. For simplicity, $\gamma_n(\alpha) = \gamma_{n'}(\alpha') \equiv \gamma_n$. The terms simplify to:
            \begin{align}
                C_{n,n} &= \exp\left( -2\pi\beta \gamma_n^2 \right), \\
                F_{n,n}(x_1) &= 1, \\
                G_{n,n}(x_2) &= \exp\left( -4\pi\beta \gamma_n x_2 \right), \\
                H_{n,n}(\bfk) &= \exp\left( -2 \gamma_n k_1 \right).
            \end{align}
            Hence, after substituting into Eq.~\eqref{eq: phi phi conjugate for orthogonality}, we have 
            \begin{align}
                |\phi_{0,\alpha,\beta, \bfk}&(\bfx)|^2/  N^2_{k_1}
                \\  = & \sum_{n \in \bbZ} \exp\left( -2\pi\beta \gamma_n^2 - 2  \gamma_n k_1 \right)\exp\left( -B x_2^2 - 2\left(k_1  + 2\pi\beta \gamma_n\right) x_2 \right)\\
                = &\exp(k_1^2/B) \sum_{n \in \bbZ} \exp\left( -B \left( x_2 + \gamma_n + \frac{k_1}{B} \right)^2 \right).
            \end{align}
            Finally, we integrate over $\Omega$. Using the change of variable $u = x_2 + n$, the sum of integrals over $[-1, 0]$ becomes a single integral over $\bbR$:
            \begin{align}
                &\norm{\phi_{0,\alpha,\beta, \bfk}(\bfx)}_{L^2(\Omega)}^2 /  N^2_{k_1}\\ = & \exp(k_1^2/B)\sum_{n\in\bbZ}\int_{-1}^0 \exp\left( -B \left( x_2 + \gamma_n + \frac{k_1}{B} \right)^2\right)\,\mathrm{d}x_2\\
                = &\exp(k_1^2/B)\int_\bbR\exp\left(-B\left(u + \frac{\alpha}{\beta}+\frac{k_1}{B}\right)^2\right)\,\mathrm{d}u\\
                = & \sqrt{\frac{\pi}{B}}\exp\left(\frac{k_1^2}{B}\right),
            \end{align}
            as claimed.
        \end{proof}

        The normalized excited states corresponding to energy $E_m=B(m+1/2)$ belong to the eigenspace spanned by the set
        \begin{equation}
            \set{\frac{1}{\sqrt{m!}}(L^+)^m \phi_{0,\alpha,\beta, \bfk}(z,\conjz): \; \alpha\in\set{0,1,\ldots,\beta-1}}.
        \end{equation}
        \begin{remark}\label{remark: normalized raising and lowering operators}
            After normalization, the usual properties of the raising and lowering operators acting over an eigenfunction must be modified. More precisely, for any $m\in\bbN_0$, $\alpha\in\set{0,1,\ldots,\beta-1}$, and $\bfk\in\Omega^*$, we have
            \begin{align}
                \phi_{m+1,,\alpha,\beta, \bfk}&=\frac{1}{\sqrt{m+1}}L^+\phi_{m,\alpha,\beta, \bfk}, \\
                L^-\phi_{m+1,\alpha,\beta, \bfk}&=\sqrt{m+1}\phi_{m,\alpha,\beta, \bfk}.
            \end{align}

        \end{remark}
        In the next section, we will determine explicit expressions for the excited states, which are suitable for numerical simulations. We close this section by remarking that the set $\set{\phi_{m,\alpha,\beta,\bfk}:\; m\in\bbN_0,\; \alpha\in\set{0,1,\ldots,\beta-1}}$ is a basis of $\set{\psi_\bfk\in L^2(\Omega):  S_{\bfl}\psi_{\bfk}(\bfx) = \exp(-\iu \bfl\cdot \bfk)\psi_{\bfk}(\bfx)}$ for each fixed $\bfk\in\Omega^*$.
    
    \subsection{Hall conductivity and Chern number\label{sec:hall_conductivity_and_Chern_number}}
        Let $\set{(\psi_{n,\bfk}, \lambda_{n,\bfk}):\,n\in\bbN_0,\,\bfk\in\Omega^*}$ be the eigenpairs obtained when solving Eq.~\eqref{eq: Eigenvalue Problem at k}, 
        which in the absence of an external potential correspond to the Landau levels. In general, the set of values $\set{\lambda_{n,\bfk}: \bfk\in\Omega^*}$ for each 
        $n\in\bbN_0$ defines the band structure of the system. This band structure can give rise to a non-trivial Chern number $\Cnumber_n$ and, correspondingly, to a 
        quantized Hall conductance \cite{thoulessQuantizedHallConductance1982}. To determine the Chern number of the $n$-th band, we introduce 
        $u_{n,\bfk}(\bfx)=\exp(-\iu\bfx\cdot\bfk)\psi_{n,\bfk}(\bfx)$ and define the Berry connection
        \begin{equation}
        \bfa_n(\bfk) = - \iu \langle u_{n,\bfk} | \nabla_\bfk u_{n,\bfk} \rangle.  
        \end{equation}
        Notice that the Berry connection is a vector (i.e., $\bfa_n(\bfk)=(a_{n,1}(\bfk), a_{n,2}(\bfk))$). Furthermore, this quantity is independent of the choice of the unit cell as
        $\overline{u}_{n,\bfk}(\bfx)\nabla_\bfk u_{n,\bfk}(\bfx)$ is periodic, which follows immediately from the definition of the magnetic translation operator. 
        Here, we perform the analysis for non-degenerate bands. Later, we will determine the Chern number numerically using the Fukui-Hatsugai-Suzuki method~\cite{fukuiChernNumbersDiscretized2005},
        which can then be straightforwardly generalized to degenerate bands. The Chern number $\Cnumber_n$ is defined in terms of the Berry connection and is given by 
        \begin{equation}
                \Cnumber_n = \frac{1}{2 \pi } \int_{\Omega^*} \left[ \partial_{k_1} a_{n,2}(\bfk) -\partial_{k_2} a_{n,1}(\bfk)\right] \,\mathrm{d}\bfk.
        \end{equation}
        For the system with no external potential, the eigenstates of the band structure are given by the basis states $\phi_{m,\alpha,\beta,\bfk}(\bfx)$, and the Berry connection takes the form $a_1(\bfk) = 0$ and $a_2(\bfk)=  k_1/B$ for all Landau levels. 
        That is, it describes states with a homogeneous Berry flux, and the corresponding Chern number of the different Landau levels is $\Cnumber=1$. This property is shown in Lemma~\ref{lem: berry_connection}.   
        
        In the presence of a periodic external potential, the eigenstates $\psi_{n,\bfk}$ can be formally expanded in the basis of the magnetic translation eigenfunctions, such that     
        \begin{equation}
                \psi_{n,\bfk}(z,\conjz) = \sum_{m, \alpha} c_{n,m,\alpha,\beta}(\bfk)  \phi_{m,\alpha,\beta, \bfk}(z,\conjz). \label{expansion}
        \end{equation}
        Then, the Chern number reduces to
        \begin{equation}
                \Cnumber_n = 1 + \frac{1}{2 \pi } \int_{\Omega^*} \left[ \partial_{k_1} \tilde{a}_{n,2}(\bfk) -\partial_{k_2} \tilde{a}_{n,1}(\bfk)   \right] \,\mathrm{d}\bfk \equiv 1+ \tilde{\Cnumber}_n, \label{chern-number}
        \end{equation}
        with 
        \begin{equation}
                \tilde{a}_{n,\nu}(\bfk) = - \iu \sum_{m,\alpha} \overline{c}_{n,m,\alpha,\beta}(\bfk) \partial_{k_\nu} c_{n,m,\alpha,\beta}(\bfk),\quad\nu=1,2. 
        \end{equation}
        This property follows again from Lemma~\ref{lem: berry_connection} and the expansion in Eq.~\eqref{expansion}:
        the Berry connection is split into three parts
        \begin{align}
                a_{n,1}(\bfk) &=   \tilde{a}_{n,1}(\bfk) + \Delta   a_{n,1}(\bfk) \\
                a_{n,2}(\bfk) &= \frac{k_1}{B}   +  \tilde{a}_{n,2}(\bfk) +\Delta  a_{n,2}(\bfk), 
        \end{align}
        where the last term $\Delta  a_{n,\nu}(\bfk) $ follows from the coupling between different Landau levels. 
        For example, it takes the form

        \begin{align}
            &\Delta  a_{n,1}(\bfk)\\ = &\sum_{\alpha, \alpha',m, m'} \overline{c}_{n,m,\alpha,\beta}(\bfk)   \frac{ 1}{\sqrt{2B}} \delta_{\alpha,\alpha'} \left( \sqrt{m}\delta_{m,m'+1}+\sqrt{m'}\delta_{m+1,m'} \right)   c_{n,m',\alpha',\beta}(\bfk) 
        \end{align}
        and analogously for $\Delta  a_{n,2}(\bfk) $. Note that these terms are continuous, periodic, and invariant under gauge transformations 
        $c_{n,m,\alpha,\beta}({\bfk}) \rightarrow e^{\iu \theta_{\bf k}} c_{n,m,\alpha,\beta}({\bfk})$ of the eigenstates $\psi_{n,{\bfk}}$, and therefore do not contribute to the Chern number $\Cnumber_n$. Then, Eq.~\eqref{chern-number} immediately follows from the remaining terms. 
            
        Therefore, the Chern number can be effectively determined using the Fukui–Hatsugai–Suzuki method \cite{fukuiChernNumbersDiscretized2005} from the numerical determination of the expansion coefficients $c_{n,m,\alpha,\beta}(\bfk)$ 
        over a finite grid of $\bfk$-values in the Brillouin zone. In the presence of degeneracies, which appear when the flux is such that $\beta>1$, the Chern number of a band can be computed by considering all the eigenpairs associated with this band, and the application of the Fukui–Hatsugai–Suzuki method for such degenerate bands.
        
        \begin{lemma}\label{lem: berry_connection}
            The functions $u_{m,\alpha,\beta, \bfk}(\bfx)=\exp(-\iu\bfx\cdot\bfk)\phi_{m,\alpha,\beta,\bfk}(\bfx)$ constructed from the basis states satisfy
            \begin{align}
                \langle u_{m,\alpha,\beta,\bfk} | \partial_{k_1} u_{m',\alpha',\beta,\bfk}\rangle & = - \frac{ \iu}{\sqrt{2B}} \delta_{\alpha,\alpha'} \left( \sqrt{m}\delta_{m,m'+1}+\sqrt{m'}\delta_{m+1,m'} \right),\\
                \langle u_{m,\alpha,\beta,\bfk} | \partial_{k_2} u_{m',\alpha',\beta,\bfk}\rangle & = \frac{ \iu  k_1}{B} \delta_{m,m'} \delta_{\alpha,\alpha'} +  \frac{1}{\sqrt{2B}} \delta_{\alpha,\alpha'}  \left( \sqrt{m}\delta_{m,m'+1}-\sqrt{m'}\delta_{m+1,m'} \right).
            \end{align} 
        \end{lemma}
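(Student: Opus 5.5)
The plan is to move the $\bfk$-derivative onto the basis functions $\phi_{m,\alpha,\beta,\bfk}$ themselves, and then express the result through the ladder operators of Remark~\ref{remark: normalized raising and lowering operators}. Set $D_\nu := e^{\iu\bfx\cdot\bfk}\,\partial_{k_\nu}\,e^{-\iu\bfx\cdot\bfk} = \partial_{k_\nu} - \iu x_\nu$ for $\nu=1,2$. Then
\[
\langle u_{m,\alpha,\beta,\bfk} | \partial_{k_\nu} u_{m',\alpha',\beta,\bfk}\rangle = \langle \phi_{m,\alpha,\beta,\bfk} | D_\nu \phi_{m',\alpha',\beta,\bfk}\rangle_{L^2(\Omega)} .
\]
I will first check that the integrand $\overline{\phi}_{m,\alpha,\beta,\bfk}\,D_\nu\phi_{m',\alpha',\beta,\bfk}$ is $\mcL$-periodic, as in Lemma~\ref{lem: Normalization Constant} and the remark about the Berry connection. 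This makes the inner product over $\Omega$ meaningful. Note that the operators $L^\pm$ do not depend on $\bfk$.

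\textbf{Step 1 (commutators).} From $L^+ = -\sqrt{2/B}\,\partial_z + \tfrac12\sqrt{B/2}\,\conjz$ and $x_1=(z+\conjz)/2$, $x_2=(z-\conjz)/(2\iu)$, I compute
\[
[L^+,x_1]=-\tfrac{1}{\sqrt{2B}},\qquad [L^+,x_2]=\tfrac{\iu}{\sqrt{2B}},
\]
and hence $[D_1,L^+] = -\iu/\sqrt{2B}$ and $[D_2,L^+] = -1/\sqrt{2B}$, which are constants.

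\textbf{Step 2 (ground state).} I will compute $D_\nu\phi_{0,\alpha,\beta,\bfk}$ directly from Eq.~\eqref{eq: Landau Eigenfunctions at Ground State}. Differentiating in $k_\nu$ produces three kinds of terms: from $N_{k_1}$, from $e^{\iu k_1 z}$, and a factor $\hat\Theta'_{\alpha,\beta}/\hat\Theta_{\alpha,\beta}$ from the shift $\delta(\bfk)$. Separately, applying $L^+$ to $\phi_0$ produces the same $\hat\Theta'$ term through $\partial_z$. The aim is to show
\[
D_\nu\phi_{0,\alpha,\beta,\bfk} = a_\nu(\bfk)\,\phi_{0,\alpha,\beta,\bfk} + b_\nu\,\phi_{1,\alpha,\beta,\bfk},
\]
with $b_1 = -\iu/\sqrt{2B}$, $b_2=-1/\sqrt{2B}$, $a_1=0$ and $a_2=\iu k_1/B$. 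The method is to choose $b_\nu$ so that the $\hat\Theta'$ terms cancel, and then check that the remainder is a scalar multiple of $\phi_0$.

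For the $a_\nu$ there is a consistency check. Since $\norm{u}=1$ for all $\bfk$, the real part of $\langle\phi_0|D_\nu\phi_0\rangle$ must vanish. This should serve as a check on $a_\nu$ (it fixes its real part, namely zero). The imaginary part $k_1/B$ I expect to come from the $-\iu x_2$ and $\iu k_1 z$ pieces combined with $\hat\Theta$'s quasi-periodicity. I expect this explicit cancellation to be the main obstacle. If it becomes messy, I would instead compute $\langle\phi_0|D_\nu\phi_0\rangle$ directly, using the Gaussian-sum representation from the proof of Lemma~\ref{lem: Normalization Constant}: the resulting one-dimensional Gaussian integrals give the mean of $x_2$ as $-\alpha/\beta - k_1/B$ (mod lattice shifts).

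\textbf{Step 3 (induction on $m$).} Using $\phi_{m}=(L^+)^m\phi_0/\sqrt{m!}$ together with Step 1, I get $D_\nu (L^+)^m = (L^+)^m D_\nu + m\,c_\nu (L^+)^{m-1}$, where $c_1=-\iu/\sqrt{2B}$ and $c_2=-1/\sqrt{2B}$. Therefore
\[
D_\nu\phi_{m'} = a_\nu\phi_{m'} + b_\nu\sqrt{m'+1}\,\phi_{m'+1} + c_\nu\sqrt{m'}\,\phi_{m'-1},
\]
with all functions carrying the same $\alpha$. None of these operators change $\alpha$.

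\textbf{Step 4 (matrix elements).} I then take inner products with $\phi_{m,\alpha,\beta,\bfk}$ and use orthonormality. Lemma~\ref{lem: Normalization Constant} gives orthonormality in $\alpha$, and in $m$ it follows from $(L^+)^\dagger=L^-$ and distinct Landau energies. This yields the Kronecker structure $\delta_{\alpha,\alpha'}$, the diagonal term $\iu k_1/B$ appearing only for $\nu=2$, and the off-diagonal coefficients $\sqrt{m}\,\delta_{m,m'+1}$ and $\sqrt{m'}\,\delta_{m+1,m'}$ with the signs claimed. Finally, I will check that the sign pattern is consistent with anti-Hermiticity of $D_\nu$ (which follows from $\partial_{k_\nu}\langle u_m|u_{m'}\rangle=0$); this pins down the relative signs of the $b_\nu$ and $c_\nu$ contributions.
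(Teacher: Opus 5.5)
Your strategy is sound and differs from the paper's, but as written it has a sign error that Step~4 cannot absorb. The coefficient of $\phi_1$ in $D_2\phi_0$ is $b_2=+1/\sqrt{2B}$, not $-1/\sqrt{2B}$. It looks as if you copied $c_\nu$ into $b_\nu$, which is harmless only for $\nu=1$.

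With your values $b_2=c_2=-1/\sqrt{2B}$, Step~4 gives $-\frac{1}{\sqrt{2B}}\bigl(\sqrt{m}\,\delta_{m,m'+1}+\sqrt{m'}\,\delta_{m+1,m'}\bigr)$. There the $\delta_{m,m'+1}$ term has the wrong sign, so the anti-Hermiticity check you propose fails. That check forces $b_\nu=-\overline{c_\nu}$: this gives $b_1=c_1$, since $c_1$ is purely imaginary, but $b_2=-c_2$. Equivalently, $b_2=\langle\phi_0|[L^-,D_2]\phi_0\rangle=1/\sqrt{2B}$.

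Your own recipe also gives the correct value. $\partial_{k_2}$ acts only on $\hat\Theta_{\alpha,\beta}(z-\delta(\bfk))$ and, since $\partial_{k_2}\delta=1/B$, contributes $-\frac{1}{B}\hat\Theta'/\hat\Theta$. Meanwhile $L^+\phi_0$ contains $-\sqrt{2/B}\,\hat\Theta'/\hat\Theta$.

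The cancellation you expect to be the main obstacle is in fact exact, and needs neither quasi-periodicity nor any integral. With $\hat\Theta'/\hat\Theta$ evaluated at $z-\delta(\bfk)$,
\begin{align*}
L^+\phi_0&=\Bigl(-\iu\sqrt{2B}\,x_2-\iu k_1\sqrt{2/B}-\sqrt{2/B}\,\hat\Theta'/\hat\Theta\Bigr)\phi_0,\\
D_1\phi_0&=\Bigl(-x_2-\frac{k_1}{B}+\frac{\iu}{B}\,\hat\Theta'/\hat\Theta\Bigr)\phi_0,\qquad D_2\phi_0=\Bigl(-\iu x_2-\frac{1}{B}\,\hat\Theta'/\hat\Theta\Bigr)\phi_0.
\end{align*}
Hence, pointwise, $D_1\phi_0=-\frac{\iu}{\sqrt{2B}}\phi_1$ and $D_2\phi_0=\frac{\iu k_1}{B}\phi_0+\frac{1}{\sqrt{2B}}\phi_1$. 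The diagonal term $\iu k_1/B$ comes from the $\iu k_1$ that $\partial_z$ extracts from $e^{\iu k_1 z}$ in $L^+\phi_0$, not from quasi-periodicity.

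Your fallback, computing only $\langle\phi_0|D_\nu\phi_0\rangle$ by Gaussian integrals, would not by itself give the pointwise identity that Step~3 uses. It is therefore fortunate that the fallback is not needed.

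With $b_2$ corrected, Steps~3 and~4 give exactly the lemma. The paper never identifies $D_\nu\phi_0$ as a function. It computes only the ground-state elements $\langle u_{0,\alpha}|\partial_{k_\nu}u_{0,\alpha'}\rangle$ from the Gaussian-sum representation used for Lemma~\ref{lem: Normalization Constant}: the $x_1$-integral gives $\delta_{\alpha,\alpha'}$, and first moments of the Gaussians give $0$ and $\iu k_1/B$. It then uses anti-Hermiticity to assume $m\ge m'$, and moves $(L^-)^m$ from the bra through $D_\nu$ via $[(L^-)^m,D_\nu]$. This is the bra-side counterpart of your $[D_\nu,(L^+)^m]$.

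Your ket-side version proves more: the operator identity $D_\nu\phi_{m'}=a_\nu\phi_{m'}+b_\nu\sqrt{m'+1}\,\phi_{m'+1}+c_\nu\sqrt{m'}\,\phi_{m'-1}$ within a fixed $\alpha$. It is obtained by differentiation alone, with the $\delta_{\alpha,\alpha'}$ structure built in, and integrals enter only through orthonormality. The paper's version needs less, since it uses only projections, and it reuses the normalization computation.
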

        
        \begin{proof} First, we evaluate $\langle u_{0,\alpha,\beta,\bfk} | \partial_{k_i} u_{0,\alpha',\beta,\bfk}\rangle$ and follow the proof of Lemma~\ref{lem: Normalization Constant},
            \begin{align}
                &\int_{\Omega} \overline{ u}_{0,\alpha,\beta,\bfk}(\bfx) \partial_{k_1} u_{0,\alpha',\beta,\bfk}(\bfx) \,\mathrm{d}\bfx\\ = & N_{k_1}^2 \int_{\Omega} e^{-Bx_2^2-2k_1 x_2}  \sum_{n,n'\in \bbZ}  \left(  -x_2- \gamma_n(\alpha )- \frac{k_1}{B}\right) \nonumber\\  
                &\quad \times C_{n,n'}  F_{n,n'}(x_1) G_{n,n'}(x_2) H_{n,n'}(\bfk) \,\mathrm{d}\bfx \\
                = & -  \delta_{\alpha, \alpha'}    N_{k_1}^2 \int_\bbR  \left(u + \frac{\alpha}{\beta}+\frac{k_1}{B} \right) \exp\left(-B\left(u + \frac{\alpha}{\beta}+\frac{k_1}{B}\right)^2\right)\,\mathrm{d}u  \nonumber\\
                = & 0.
            \end{align}
            On the other hand, we obtain
            \begin{align}
                &\int_{\Omega} \overline{ u}_{0,\alpha,\beta,\bfk}(\bfx) \partial_{k_2} u_{0,\alpha',\beta,\bfk}(\bfx) \,\mathrm{d}\bfx\\ = & N_{k_1}^2 \int_{\Omega} e^{-Bx_2^2-2k_1 x_2}  \sum_{n,n'\in \bbZ}  \left(  - \iu x_2- \frac{\iu\gamma_n(\alpha )}{B}\right) \nonumber\\  
                &\quad \times C_{n,n'}  F_{n,n'}(x_1) G_{n,n'}(x_2) H_{n,n'}(\bfk) \,\mathrm{d}\bfx \\
                = & -  \delta_{\alpha, \alpha'}   \iu N_{k_1}^2 \int_\bbR  \left(u + \frac{\alpha}{\beta} \right) \exp\left(-B\left(u + \frac{\alpha}{\beta}+\frac{k_1}{B}\right)^2\right)\,\mathrm{d}u \nonumber\\
                = & \frac{\iu k_1}{B}  \delta_{\alpha, \alpha'}.   
            \end{align}
            For the overlap for higher Landau levels, we use that the normalization $\langle u_{m,\alpha,\beta,\bfk} | u_{m',\alpha',\beta,\bfk}\rangle = \delta_{m,m'}\delta_{\alpha, \alpha'}$ implies
            \begin{equation}
                \langle u_{m,\alpha,\beta,\bfk} | \partial_{k_i} u_{m',\alpha',\beta,\bfk}\rangle = - \overline{ \langle u_{m',\alpha',\beta,\bfk} | \partial_{k_i} u_{m,\alpha,\beta,\bfk}\rangle}.
            \end{equation}
            Therefore, we can assume without loss of generality that $m\geq m'$. Using the commutation relation
            \begin{equation} 
                \left[ \left(L^{-}\right)^m, e^{\iu \bfk \cdot \bfx} \partial_{k_i}e^{-\iu \bfk \cdot \bfx}  \right] = - m \iu \sqrt{\frac{1}{2B}} \left(\partial_{k_i} (k_1 + \iu k_2)\right) \left(L^{-}\right)^{m-1},
            \end{equation}
            as well as the standard relations of the raising and lowering operators $L^{\pm}$ applied to the ground state $|u_{0,\alpha,\beta,\bfk}\rangle$, the equations in the lemma immediately follow. 
                
        \end{proof}

%% file: Numerics.tex
\section{Methodology\label{sec:Numerics}}
    In this section, we explain in detail how to obtain numerical solutions to Eq.~\eqref{eq: Eigenvalue Problem at k}. Given a fixed $\bfk\in\Omega^*$, we will consider a finite number of normalized eigenfunctions of the kinetic energy operator $L(\bfA)$, which we denote as $\set{\phi_{m,\bfk,\alpha,\beta}}_{m=0}^N$, project the operator $H(\bfA)=L(\bfA)+V$ onto this subspace, and solve the resulting finite-dimensional eigenvalue problem.
    
    \subsection{The Galerkin Projection Method}
        Let us quickly review the Galerkin projection method, also called the Rayleigh-Ritz method. Suppose we want to compute the eigenvalues and eigenvectors of a self-adjoint operator $H$ on a complex Hilbert space $\mcX$. We consider a finite-dimensional subspace $\mcV\subset \mcX$ with a (not necessarily orthonormal) basis $\set{\phi_m}_{m=0}^N$. Our goal is to approximate the eigenpair $(\Psi,\lambda) \in\mcX\times\bbR$, which satisfies the eigenvalue equation
        \begin{equation}
             H\Psi=\lambda\Psi,
        \end{equation}
        in $\mcX$. We first restrict $\Psi$ to $\mcV$; in other words, we express $\Psi$ as a linear combination of the chosen basis functions for $\mcV$:
        \begin{equation} 
            \Psi =\sum_{m=0}^N c_m\phi_m,\quad c_m\in \bbC,
        \end{equation}
        substitute it into the eigenvalue equation, and take the inner product against $\phi_\ell$ for $\ell=0,\ldots, N$ to get
        \begin{equation}
            \sum_{m=0}^N c_m\inpro{\phi_\ell}{H\phi_m}=\lambda \sum_{m=0}^Nc_m\inpro{\phi_\ell}{\phi_m},
        \end{equation}     
        which can be written in a more compact matrix form as
        \begin{align}
            \bfH_\mcV \bfc_\mcV&=\lambda{\bfS}_{\mcV} \bfc_\mcV,\\
            (\bfH_{\mcV})_{\ell m} &= \inpro{\phi_\ell}{H\phi_m},\\
            (\bfS_{\mcV})_{\ell m}&= \inpro{\phi_\ell}{\phi_m},\qquad \ell,m=0,\ldots, N,\\
            (\bfc_{\mcV})_\ell &=c_\ell.
        \end{align}
        What we have accomplished through this procedure is reducing an eigenvalue problem involving a linear differential operator to a matrix eigenvalue problem, at the price of obtaining only an approximate solution. However, from a computational perspective, there are still some significant challenges when applying the previous procedure to $H=L(\bfA)+V$: (i) the numerical construction of $\phi_m$ itself for large values of $m$, and (ii) the computation of the matrix elements $(H_{\mcV})_{\ell m}$, where $\ell,m=0,\ldots,N$, in particular the cross-terms between the basis functions and the potential, i.e., $\inpro{\phi_\ell}{V\phi_m}$ for $\ell,m=0,\ldots,N$. We now propose strategies to deal with these issues.
    
    \subsection{\label{subsec: Construction of the Basis} Construction of the Basis}
        Let us tackle the first complication described above, that is, the construction of an adequate basis for the approximation subspace $\mcV$. Inspired by the plane-wave methods used in density functional theory~\cite{cancesNumericalAnalysisPlanewave2012}, our strategy consists of constructing a basis of eigenfunctions of the Landau operator $L(\bfA)$ that satisfy the magnetic boundary condition in Eq.~\eqref{eq:Magnetic boundary condition}. We denote this set by 
        \begin{equation} 
            \set{\phi_{m,\bfk,\alpha,\beta} : m=0,\dots, N, \, \bfk=(k_1,k_2)\in\Omega^*, \, \alpha = 0,\dots,\beta-1 }, 
        \end{equation} 
        though for the sake of clarity, we will drop the dependence on $\bfk$, $\alpha$, and $\beta$ in the notation whenever possible.
        
        The construction of $\phi_m$ involves the iterated application of the raising operator $L^+$ to the ground-state eigenfunction $\phi_0$. By manipulating the terms in $\phi_0$, we obtain a closed expression for $\phi_m$, which can be directly implemented in a computer program. As before, it is convenient to work in complex coordinates. First, since the differential operators $\partial_{z}$ and $\partial_{\conjz}$ commute with the multiplication operators $\conjz$ and $z$, respectively, it is possible to expand the raising operator $(L^+)^m$ for any $m\in\bbN_0$. The resulting identity is 
        \begin{align}
            (L^+)^m &= 
            \begin{cases}
                1, &m=0,\\
                \displaystyle\sum_{i=0}^m (C_1)^iC_2^{m-i}(\partial_{z})^i \conjz^{m-i}, & m\geq 1,
            \end{cases}\label{eq: Raising Operator Binomial Expansion}\\
            C_1&=-\sqrt{\frac{2}{B}},\quad C_2=\frac{1}{2}\sqrt{\frac{B}{2}}.
        \end{align}
        Next, we write the ground-state function $\phi_{0}$ as follows:
        \begin{align}
            \phi_{0}(z,\conjz)&=N_{k_1}\exp\left(\frac{B}{4}z^2\right)\sum_{n\in\bbZ}f_{n}\exp(h_{n}(\conjz)z),\\
            f_{n}&=\exp\left(-\pi\beta(n+\alpha/\beta)^2-2\pi \iu \beta(n+\alpha/\beta)\delta(\bfk)\right),\\
            h_{n}(\conjz) &= -\frac{B}{4}\conjz+\iu k_1+2\pi \iu\beta(n+\alpha/\beta).
        \end{align}
        We then apply $(L^+)^m$ in the form presented in Eq.~\eqref{eq: Raising Operator Binomial Expansion} and apply the product rule, which leads to the following explicit expression for $\phi_{m}$: 
        \begin{align}
            \phi_{m}(z,\conjz)&=N_{k_1}\sum_{i=0}^m \binom{m}{i}C_1^i C_2^{m-i}\conjz^{m-i}\sum_{j=0}^i \binom{i}{j}G_j(z)\sum_{n\in\bbZ}f_{n}H_{n,i-j}(z,\conjz)\label{eq: Closed Form of the Basis Functions},\\
            G_{j}(z)&=P_j(z)\exp\left(\frac{B}{4}z^2\right),\\
            P_j(z)&=\begin{cases}
            1 & j=0,\\
            P_{j-1}'(z)+\frac{B}{2}P_{j-1}(z)z & j>0,
            \end{cases}\\
            H_{n,i-j}(z,\conjz)&=(h_{n}(\conjz))^{i-j}\exp(h_{n}(\conjz)z),
        \end{align}
        which holds for any $m\in\bbN_0$. 
       
        \subsection{\label{subsec: Computation of the Matrix Elements}Computation of the Matrix Elements}
        Now we focus on the computation of the matrix elements, that is, terms of the form $\inpro{\phi_\ell}{H\phi_m}$. To begin,
        \begin{equation}
            \inpro{\phi_\ell}{H\phi_m}=\inpro{\phi_\ell}{L(\bfA)\phi_m}+\inpro{\phi_\ell}{V\phi_m}\label{eq: Matrix Elements of the Hamiltonian, Linearity}.
        \end{equation}
        The first term on the right-hand side of Eq.~\eqref{eq: Matrix Elements of the Hamiltonian, Linearity} is easy to evaluate because $\phi_m$ is an eigenfunction of $L(\bfA)$, and the basis functions are orthonormalized, so
        \begin{equation}
            \inpro{\phi_\ell}{L(\bfA)\phi_m}=B\left(\ell+\frac{1}{2}\right)\delta_{\ell m}\label{eq: kinetic energy for matrix elements final form}.
        \end{equation}
        The second term on the right-hand side of Eq.~\eqref{eq: Matrix Elements of the Hamiltonian, Linearity}, $\inpro{\phi_\ell}{V\phi_m}$ for $0\leq \ell,m\leq N$, is more challenging. Even with the closed expression for the basis functions, Eq.~\eqref{eq: Closed Form of the Basis Functions}, directly computing the integrals is infeasible, as is attempting to use numerical methods without prior simplification. In order to simplify the expressions, we exploit the property $(L^+)^\dagger = L^-$. 
        
        To proceed, let us distinguish two cases: $\ell\leq m$ and $\ell>m$. If we obtain an expression for the former, the latter can be easily computed from the relationship
        \begin{equation}
            \inpro{\phi_\ell}{V\phi_m}=\overline{\inpro{\phi_m}{\overline{V}\phi_\ell}}=\overline{\inpro{\phi_m}{V\phi_\ell}},
        \end{equation}
        where the last equality holds for real-valued potentials. 
    
    \subsubsection{\texorpdfstring{The Case $\ell\leq m$}{The Case l <= m}}
        We have the following product rule for the lowering operator, which is a consequence of the product rule for derivatives and the commutation properties between $\partial_{\conjz}$ and $z$:
        \begin{equation}
            (L^-)^\ell(F(z,\conjz)G(z,\conjz)) = \sum_{j=0}^\ell\binom{\ell}{j}(C_1\partial_{\conjz})^j F(z,\conjz) (L^-)^{\ell-j}G(z,\conjz).
        \end{equation}
        Now, using this with $F(z,\conjz)=V(z,\conjz)$ and $G(z,\conjz)=\phi_{m}(z,\conjz)$ yields
        \begin{align}
            \inpro{\phi_\ell}{V\phi_m}&=\frac{1}{\sqrt{\ell!}}\inpro{\phi_0}{(L^-)^\ell (V\phi_m)}\\
            &=\sqrt{\frac{m!}{\ell!}}\sum_{j=0}^\ell\binom{\ell}{j}\frac{1}{\sqrt{(m-\ell+j)!}}C_1^j\inpro{\phi_0}{(\partial_{\conjz}^j V)\phi_{m-\ell+j}}\label{eq: Potential Cross Terms}.
        \end{align}
        In some cases, such as when the potential consists of additive sinusoidal terms, the computation of $\partial_{\conjz}^j V$ is straightforward. On the other hand, it is possible in full generality that the potential is not simple enough to allow closed expressions for its derivatives or is simply not regular enough. In that case, we can exploit its periodicity. By using its Fourier representation, 
        \begin{align}
            V(\bfx) &=\sum_{\bfxi\in\mcL^*}\hat{V}_\bfxi\exp(\iu\bfx\cdot \bfxi),\\
            \mcL^*&=\set{(\xi_{n_x},\xi_{n_y})=2\pi(n_x,n_y): n_x,n_y\in\bbZ},
        \end{align}
        we can easily compute $\partial_{\conjz}^jV$. In fact, after returning to the spatial coordinates $\bfx=(x_1,x_2)$,
        \begin{equation}
            (\partial_{\conjz}^j V)(x_1,x_2)=\frac{1}{2^j}\sum_{\bfxi\in\mcL^*} \hat V_\bfxi\sum_{k=0}^j\binom{j}{k}\iu^{2j-k}\xi_{n_x}^k\xi_{n_y}^{j-k}\exp(\iu\bfx\cdot\bfxi).
        \end{equation} 
        If we substitute this back into Eq.~\eqref{eq: Potential Cross Terms}, we get
        \begin{align}
        \inpro{\phi_\ell}{V\phi_m} &= \sqrt{\frac{m!}{\ell!}} \sum_{j=0}^\ell \binom{\ell}{j}\frac{1}{\sqrt{(m-\ell+j)!}} \left(\frac{C_1}{2}\right)^j \sum_{\bfxi\in{\mcL^*}} \hat V_{\bfxi} \nonumber \\
        &\quad \times \sum_{k=0}^j \binom{j}{k} \iu^{2j-k}\xi_{n_x}^k \xi_{n_y}^{j-k} \inpro{\phi_0}{\exp(\iu\bfx\cdot\bfxi)\phi_{m-\ell+j}}.
        \end{align}
        This expression does not seem too promising, at least at first sight. However, the terms of the form 
        \begin{equation}
            \inpro{\phi_0}{ \exp(\iu\bfx\cdot \bfxi) \phi_{m-\ell+j}}
        \end{equation}
        can be further simplified. For clarity, we make the dependence on $\alpha$ explicit in the following result.
        
        \begin{lemma}
            Let $\bfk=(k_1,k_2)\in\Omega^*$. For any $m\in\bbN_0$, and any $\alpha,\alpha'\in\set{0,\ldots,\beta-1}$, we have
            \begin{equation}
                \inpro{\phi_{0,\alpha}}{\exp(\iu\bfx\cdot \bfxi) \phi_{m,\alpha'}}=\frac{1}{\sqrt{m!}}\left(\frac{C_1}{2}(\iu k_1+k_2)\right)^m\inpro{\phi_{0,\alpha}}{\exp(\iu \bfx\cdot \bfxi)\phi_{0,\alpha'}}\label{eq: Fourier Tranform fmn}.
            \end{equation}
        \end{lemma}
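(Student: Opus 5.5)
The plan is to reduce everything to the ground state by moving the raising operators across the inner product, using $(L^+)^\dagger=L^-$ and the fact that $L^-$ annihilates every ground state. By Remark~\ref{remark: normalized raising and lowering operators}, $\phi_{m,\alpha'}=\frac{1}{\sqrt{m!}}(L^+)^m\phi_{0,\alpha'}$. Hence
\begin{equation*}
\inpro{\phi_{0,\alpha}}{e^{\iu\bfx\cdot\bfxi}\phi_{m,\alpha'}}=\frac{1}{\sqrt{m!}}\inpro{(L^-)^m\bigl(e^{-\iu\bfx\cdot\bfxi}\phi_{0,\alpha}\bigr)}{\phi_{0,\alpha'}}.
\end{equation*}
Here I have also moved the multiplication operator to the left slot; it is unitary, with adjoint $e^{-\iu\bfx\cdot\bfxi}$.

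The first thing to justify is that this adjoint relation holds on $L^2(\Omega)$, not only on $L^2(\bbR^2)$. Integrating $\overline{f}\,L^+g$ by parts over $\Omega$ produces boundary terms involving $\overline{f}g$. Both $\phi_{0,\alpha}$ and $e^{\iu\bfx\cdot\bfxi}\phi_{m,\alpha'}$ satisfy the magnetic boundary condition~\eqref{eq:Magnetic boundary condition} with the same $\bfk$. This is because $\bfxi\in\mcL^*$ gives $\bfl\cdot\bfxi\in2\pi\bbZ$, so $e^{\iu\bfx\cdot\bfxi}$ is $\mcL$-periodic. Under $S_\bfl$ the magnetic phase factors cancel in the product $\overline{f}g$, so $\overline{f}g$ is $\mcL$-periodic, exactly as in Lemma~\ref{lem: Normalization Constant}. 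The contributions from opposite edges of $\Omega$ then cancel, and no boundary terms remain.

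The main computation is a commutator. Write $\bfx\cdot\bfxi=\tfrac12(z\,\overline{\zeta}+\conjz\,\zeta)$ with $\zeta=\xi_{n_x}+\iu\xi_{n_y}$. Then $\partial_{\conjz}e^{-\iu\bfx\cdot\bfxi}=-\tfrac{\iu}{2}\zeta\,e^{-\iu\bfx\cdot\bfxi}$. From the definition of $L^-$, which involves only $\partial_{\conjz}$ and multiplication by $z$, we get
\begin{equation*}
L^-\bigl(e^{-\iu\bfx\cdot\bfxi}g\bigr)=e^{-\iu\bfx\cdot\bfxi}\Bigl(L^-g-\tfrac{\iu}{2}\sqrt{\tfrac{2}{B}}\,\zeta\, g\Bigr).
\end{equation*}
Equivalently, $e^{\iu\bfx\cdot\bfxi}L^-e^{-\iu\bfx\cdot\bfxi}=L^-+c$ for a scalar $c$ proportional to $C_1\zeta/2$. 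Since the shifted operator commutes with the scalar, the binomial expansion of $(L^-+c)^m$ applied to $\phi_{0,\alpha}$ collapses, because $L^-\phi_{0,\alpha}=0$. This gives $(L^-)^m(e^{-\iu\bfx\cdot\bfxi}\phi_{0,\alpha})=c^m e^{-\iu\bfx\cdot\bfxi}\phi_{0,\alpha}$. Pulling $\overline{c}^{\,m}$ out of the antilinear slot and moving the exponential back to the right then yields
\begin{equation*}
\inpro{\phi_{0,\alpha}}{e^{\iu\bfx\cdot\bfxi}\phi_{m,\alpha'}}=\frac{\overline{c}^{\,m}}{\sqrt{m!}}\inpro{\phi_{0,\alpha}}{e^{\iu\bfx\cdot\bfxi}\phi_{0,\alpha'}}.
\end{equation*}
The same conclusion follows from the paper's product rule for $(L^-)^\ell$ with $F=e^{-\iu\bfx\cdot\bfxi}$: only the $j=m$ term survives, and it contributes $(C_1\partial_{\conjz})^m e^{-\iu\bfx\cdot\bfxi}$.

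The main obstacle is bookkeeping of the constant rather than any structural step. I must track the sign of $\iu$, the conjugation from the left slot, and which frequency enters. The argument produces a factor built from the Fourier frequency $\bfxi$, namely $\frac{C_1}{2}$ times a linear combination $\pm\iu\xi_{n_x}\pm\xi_{n_y}$. I would therefore read the $(\iu k_1+k_2)$ in~\eqref{eq: Fourier Tranform fmn} as the components of the frequency $\bfxi$ appearing in $e^{\iu\bfx\cdot\bfxi}$, not the Bloch vector $\bfk$. I would fix the signs by carefully rechecking $\partial_{\conjz}(\bfx\cdot\bfxi)$ and the conjugation step against this convention.
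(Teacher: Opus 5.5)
Your argument is correct and is essentially the paper's proof: the paper moves one $L^+$ across at a time by induction on $m$, using $L^-\phi_{0,\alpha}=0$ and the $\partial_{\conjz}$-derivative of $e^{-\iu\bfx\cdot\bfxi}$, which is your identity $e^{\iu\bfx\cdot\bfxi}L^-e^{-\iu\bfx\cdot\bfxi}=L^-+c$ applied $m$ times, and your check that no boundary terms arise on $\Omega$ covers a point the paper does not address. On the sign, trust your direct computation, which gives $\overline{c}=\frac{1}{\sqrt{2B}}(\iu\xi_{n_x}+\xi_{n_y})=-\frac{C_1}{2}(\iu\xi_{n_x}+\xi_{n_y})$; the paper's $+\frac{C_1}{2}$, and likewise your aside via its product rule, comes from writing the derivative part of $L^-$ as $C_1\partial_{\conjz}$ when its coefficient is actually $\sqrt{2/B}=-C_1$, so the identity really holds with an extra factor $(-1)^m$ and with $\bfxi$ in place of $\bfk$, as you correctly read it.
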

        
        \begin{proof}
            We proceed by induction. Since $L^-\phi_{0,\alpha}=0$,
            \begin{align}
                L^-(\exp(-\iu\bfx\cdot \bfxi)\phi_{0,\alpha}) &= \exp(-\iu\bfx\cdot \bfxi) L^-\phi_{0,\alpha} + C_1(\partial_{\conjz}\exp(-\iu\bfx\cdot \bfxi))\phi_{0,\alpha}\\
                &=\frac{C_1}{2}(-\iu\xi_{n_x}+ \xi_{n_y})\exp(-\iu\bfx\cdot \bfxi)\phi_{0,\alpha}\label{eq: Fourier Transform phi exp}.
            \end{align}
            Then the base case with $m=1$ is immediate. Now assume the identity in Eq.~\eqref{eq: Fourier Tranform fmn} holds for $m\in\bbN_0$. Then 
            \begin{align}
                 &\inpro{\phi_{0,\alpha}}{\exp(\iu\bfx\cdot \bfxi) \phi_{m+1,\alpha'}}\\
                 = &\frac{1}{\sqrt{m+1}}\inpro{L^-(\phi_{0,\alpha}\exp(-\iu\bfx\cdot\bfxi))}{\phi_{m,\alpha'}}\\
                 = &\frac{C_1}{2\sqrt{m+1}}(\iu \xi_{n_x}+\xi_{n_y})\inpro{\phi_{0,\alpha}}{\exp(\iu\bfx\cdot \bfxi)\phi_{m,\alpha'}}\\
                 = &\frac{1}{\sqrt{(m+1)!}}\left(\frac{C_1}{2}(\iu k_1+k_2)\right)^{m+1}\inpro{\phi_{0,\alpha}}{\exp(\iu \bfx\cdot \bfxi)\phi_{0,\alpha'}},
            \end{align}
            where we have used the induction hypothesis in the last equality.
        \end{proof}
        
       The advantage of Eq.~\eqref{eq: Fourier Tranform fmn} is that it allows us to compute the terms $\inpro{\phi_{0,\alpha}}{\exp(\iu\bfx\cdot \bfxi)\phi_{m-\ell+j,\alpha'}}$ by just using $\inpro{\phi_{0,\alpha}}{\exp(\iu \bfx\cdot \bfxi)\phi_{0,\alpha'}}$. In particular, it replaces the multiple sums appearing in Eq.~\eqref{eq: Closed Form of the Basis Functions}, which originate from the derivatives, with a single polynomial expression in terms of $k_1$ and $k_2$. Not only is this computationally more efficient, but by lowering the number of sums, the chance of error propagation is also reduced. The final expression for the computation of the terms involving the potential is: 
        \begin{align}
        &\inpro{\phi_{\ell,\alpha}}{V\phi_{m,\alpha'}} \nonumber \\ 
        = & \sqrt{\frac{m!}{\ell!}} \sum_{j=0}^\ell \binom{\ell}{j}\frac{1}{\sqrt{(m-\ell+j)!}} \left(\frac{C_1}{2}\right)^j \sum_{\bfxi\in{\mcL^*}} \hat V_{\bfxi} \\
        &\quad \times \sum_{k=0}^j \binom{j}{k} \frac{\iu^{2j-k}\xi_{n_x}^k \xi_{n_y}^{j-k}}{\sqrt{(m-\ell+j)!}} \left(\frac{C_1}{2}(\iu k_1+k_2)\right)^{m-\ell+j}\inpro{\phi_{0,\alpha}}{\exp(\iu\bfx\cdot\bfxi)\phi_{0,\alpha'}} \nonumber \\
        = &\sqrt{\frac{m!}{\ell!}} \sum_{j=0}^\ell \binom{\ell}{j}\frac{1}{(m-\ell+j)!} \left(\frac{C_1}{2}\right)^j   
        \sum_{k=0}^j \binom{j}{k} \left(\frac{C_1}{2}(\iu k_1+k_2)\right)^{m-\ell+j}  \\
        &\quad \times \sum_{\bfxi\in{\mcL^*}} \hat V_{\bfxi} \iu^{2j-k}\xi_{n_x}^k \xi_{n_y}^{j-k} \inpro{\phi_{0,\alpha}}{\exp(\iu\bfx\cdot\bfxi)\phi_{0,\alpha'}}\label{eq: potential for matrix elements final form}.
       \end{align}

%% file: Results.tex
\section{Results and Discussion\label{sec: Results}}
In this section, we present numerical experiments to test the proposed Galerkin method described previously. The algorithms were implemented in Python and are fully available. 

\subsection{Integer Fluxes}
We tested the method using a well-known periodic sinusoidal potential of the form:
\begin{equation} 
    V(\bfx) = \frac{V_0}{4} \left[ 2 - \cos\left(\frac{2\pi}{a}x_1\right) - \cos\left(\frac{2\pi}{a}x_2\right) \right]\label{eq: potential for simulations}, 
\end{equation} 
where $a$ is the lattice constant. The system is subject to a perpendicular uniform magnetic field of strength $B$. In our experiments, we fixed $a=1$, but the code is implemented to allow for variable $a$. The parameter $V_0$ modulates the relative strength of the potential: high values of $V_0$ in relation to the magnetic field strength $|B|$ correspond to the regime governed by the deep wells of the potential (yielding an optical lattice), while small values recover the weak-potential limit discussed in the introduction. We numerically constructed the band structure for the different regimes described below and computed the associated Chern numbers using the Fukui-Hatsugai-Suzuki method~\cite{fukuiChernNumbersDiscretized2005}.
    
We first performed a numerical study on two distinct systems: one with non-degenerate Landau levels ($\beta=1$) and one where degeneracy is present ($\beta=4$), corresponding to $B=2\pi$ and $B=8\pi$, respectively. The simulation parameters are summarized in Table~\ref{tab: simulation parameters}, which lists the physical parameters, their variable names in the code, and their specific values. To test the method across different physical regimes, we varied the potential amplitude $V_0$ in Eq.~\eqref{eq: potential for simulations} while keeping the magnetic field strength constant. The results are presented as a function of the ratio between these parameters, given by:
\begin{equation}
    R = \frac{V_0}{|B|}.
\end{equation}
The specific values of $R$ used in the simulations, alongside the corresponding Chern numbers, are listed in Table~\ref{tab: potential data}. For the Chern number computation, we employed a regular grid on the Brillouin zone defined as follows:
\begin{equation} 
    \mathcal{K} = \left\{ \left( -\pi + \frac{2\pi i}{n_{k_1}-1}, \, -\pi + \frac{2\pi j}{n_{k_2}-1} \right) : i \in \{0, \dots, n_{k_1}-1\}, \, j \in \{0, \dots, n_{k_2}-1\} \right\}, 
\end{equation} 
where $n_{k_1}$ and $n_{k_2}$ denote the number of grid points in the $k_1$ and $k_2$ directions, respectively. For our simulations, we set $n_{k_1} = n_{k_2} = 11$. When constructing the band structure, we followed the path $M\rightarrow X\rightarrow \Gamma\rightarrow M$ and employed a higher number of $k$-points per edge (\texttt{points\_per\_segment}), as detailed in Table~\ref{tab: simulation parameters}. 
    
Finally, two different parameters are used to control the number of basis functions employed in the simulations. The number of basis functions used in the band structure computations is controlled by the variable \texttt{N\_bands}, while the number used for the Chern number computation is \texttt{N\_topo}. In the former, the number of basis functions is $(\texttt{N\_bands})\beta$, while in the latter it is $(\texttt{N\_topo})\beta$, and in general $\texttt{N\_topo}<\texttt{N\_bands}$. We chose different values because the band structure can be reconstructed by requiring only the eigenvalues of the system; as explained in Section~\ref{subsec: Computation of the Matrix Elements}, the matrix elements can be reduced to inner products depending only on the ground-state eigenfunctions $\set{\phi_{0, \bfk, \alpha,\beta}: \bfk\in\Omega^*, \alpha=0,\ldots,\beta-1}$. In particular, this allows us to work with higher excitation bands. Computing the Chern number, in contrast, explicitly requires access to all the basis functions $\set{\phi_{m, \bfk, \alpha,\beta}: m=0,\ldots,\texttt{N\_topo}, \bfk\in\Omega^*, \alpha=0,\ldots,\beta-1}$. Since the construction of the basis is computationally expensive, we are forced to work with a smaller number of basis functions for the calculation of the Chern number.
    
\begin{table}[htbp]
\centering
\caption{Overview of common and variable simulation parameters.}
\label{tab: simulation parameters}
    \begin{tabular}{llc}
        \toprule
        \textbf{Parameter Name} & \textbf{Variable} & \textbf{Value} \\ 
        \midrule
        
        \multicolumn{3}{l}{\textbf{Common Parameters}} \\
        \midrule
        Lattice Constant & \texttt{a} & $1$ \\
        $\bfk$-points & \texttt{nkx}$\times$\texttt{nky}  & $11\times 11$ \\
        $\Theta$ function truncation & \texttt{trunc} & $10$ \\
        Fourier series truncation & \texttt{Ecut} & $2^6$ \\
        Points per segment & \texttt{points\_per\_segment} & $100$ \\
        Number of States for Bands & \texttt{N\_bands} & $50$ \\
        Number of States for Chern number & \texttt{N\_topo} & $30$ \\
        \midrule

        \multicolumn{3}{l}{\textbf{Variable Parameter: Magnetic Field Strength}} \\
        \midrule
        Non-degenerate case & \texttt{B} & $2\pi$ \\
        Degenerate case & \texttt{B} & $8\pi$ \\
        \bottomrule
        Rational flux & \texttt{B} & $6\pi$ \\
        \bottomrule
    \end{tabular}
\end{table}
    
\begin{table}[htbp]
\centering
\caption{Various ratios of potential amplitude to magnetic field strength ($R$) and the corresponding Chern numbers. The middle column shows the results for a system using a non-degenerate basis ($\beta=1$). The right column shows the Chern number for a system using a degenerate basis ($\beta=4$). Blank cells (--) indicate the ratio was not evaluated for that specific configuration.}
        \label{tab: potential data}
        \begin{tabular}{ccc}
            \toprule
            & \multicolumn{2}{c}{$\Cnumber$} \\ 
            \cmidrule(lr){2-3} 
            $R$ & $\beta = 1$ & $\beta = 4$ \\
        \midrule
        $10^{-5}$ & $1$  & $1$ \\
        $0.1$     & $1$  & $1$ \\
        $1.0$     & $1$  & $1$ \\
        $2.0$     & $1$  & $1$ \\
        $3.0$     & $1$  & $1$ \\
        $3.1$     & $1$  & --  \\
        $3.16833$ & $1$  & --  \\
        $3.2$     & $0$  & --  \\
        $3.8$     & --   & $1$ \\
        $3.90355$ & --   & $0$ \\
        $4.0$     & --   & $0$ \\
        $5.0$     & $0$  & --  \\
        $10.0$    & $0$  & $0$ \\
        $20.0$    & $0$  & $0$ \\
        \bottomrule
    \end{tabular}
\end{table}
    
To obtain a better estimate of the potential strength at the band crossing, we applied a simple bisection procedure on the ratio $R$. The procedure is as follows:
\begin{enumerate}
    \item \textbf{Initialization:} We identified an initial interval $[R_{l,0}, R_{u,0}]$ such that the corresponding Chern numbers are $\mathcal{C}_l = 1$ and $\mathcal{C}_u = 0$, respectively.
    \item \textbf{Iteration:} At each step $i$, we computed the midpoint $R_{m,i} = \frac{1}{2}(R_{l,i} + R_{u,i})$ and evaluated its Chern number $\mathcal{C}_m$.
    \begin{itemize}
        \item If $\mathcal{C}_m = 1$, we updated the lower bound: $R_{l,i+1} = R_{m,i}$ and $R_{u,i+1} = R_{u,i}$.
        \item Otherwise (if $\mathcal{C}_m = 0$), we updated the upper bound: $R_{u,i+1} = R_{m,i}$ and $R_{l,i+1} = R_{l,i}$.
    \end{itemize}
    \item \textbf{Termination:} We repeated this process until the interval width $R_{u,n} - R_{l,n} < \texttt{TOL}$.
\end{enumerate}
The required number of iterations $n$ can be estimated using the formula:
\begin{equation} 
    n > \frac{\log\left( \frac{R_{u,0} - R_{l,0}}{\texttt{TOL}} \right)}{\log(2)}.
\end{equation}

We now discuss the simulation results:
\begin{enumerate}[label=(\roman*)] 
    \item \emph{Non-degenerate case ($B=2\pi$)}: In this regime, $\beta=1$, indicating non-degenerate Landau levels. In Figure~\ref{fig: non_degenerate_bands}, we plot selected band structures, while the corresponding Chern numbers are depicted in Figure~\ref{fig: non_degenerate_chern}. In the weak potential regime (Figures~\ref{fig: non_degenerate_bands_0_0100} and~\ref{fig: non_degenerate_bands_0_1000}), the bands are nearly flat, closely resembling the unperturbed Landau levels, and the computed Chern number evaluates exactly to $1$ within machine precision. As we increase the potential strength (Figures~\ref{fig: non_degenerate_bands_3_0000} and~\ref{fig: non_degenerate_bands_3_2000}), we observe band deformation; specifically, the lowest and first excited bands begin approaching each other, indicating that the band gap is closing. In fact, in Figure~\ref{fig: non_degenerate_bands_3_2000}, the bands appear to touch. Although numerical approximations and the limitations of our discretized scheme prevent us from definitively confirming an exact gap closure at $V_0/|B|=3.2$, the computed Chern number drops to $0$. This transition indicates that a crossover indeed occurs between $V_0/|B|=3$ and $V_0/|B|=3.2$. To obtain a more precise estimate, we performed the bisection method with $R_l=3.0$ and $R_u=3.2$. The results are shown in Table~\ref{tab:bisection_non_degenerate}, which pinpoints the transition at $R\approx 3.16833$. The eigensurface of the system before and after the band crossing is shown in Figure~\ref{fig: non_degenerate_surface}. As the potential increases further (Figures~\ref{fig: non_degenerate_bands_20_0000}), the bands begin to flatten again. This is consistent with the system entering an optical lattice regime governed by deep potential wells. Figure~\ref{fig: non_degenerate_density} displays the ground-state density transitions. At moderate potentials, the density begins to localize near the potential wells, and by $V_0/|B|=20$, the localization of the electrons at the potential wells becomes highly evident across the supercell.  

    \item \emph{Degenerate case ($B=8\pi$)}: In this regime, $\beta=4$, leading to a four-fold degeneracy of the Landau levels. We chose the potential strengths such that most of the tested ratios $R$ match those used in the non-degenerate case. The evaluated ratios differ between the two systems only near their respective band crossing points. The resulting band structures are shown in Figure~\ref{fig: split_bands}, and the corresponding Chern numbers are depicted in Figure~\ref{fig: split_chern}. Upon introducing the potential, the degeneracy is lifted and the bands split, a phenomenon evident even in the weak potential regime (Figures~\ref{fig: split_bands_0_0100} and~\ref{fig: split_bands_0_1000}). Note that the band splitting is not instantaneous: some bands still touch at the $\Gamma$ point. As the potential strength increases, the lowest bands begin to flatten, while the third and fourth bands approach each other (Figures~\ref{fig: split_bands_3_8000} to~\ref{fig: split_bands_4_0000}). Between the ratios $V_0/|B|=3.8$ and $V_0/|B|=4.0$, these bands cross, closing the gap. As before, we employed the bisection method to estimate the transition value. The results are shown in Table~\ref{tab:bisection_split}, identifying the transition at $R\approx 3.90355$. As expected, this gap closure is accompanied by the total Chern number jumping from $1$ to $0$. Further increasing the potential strength (Figures~\ref{fig: split_bands_4_0000} to~\ref{fig: split_bands_20_0000}) leads to completely flat bands, as the system once again behaves like an optical lattice. The corresponding ground-state densities (not depicted here) for the weak and strong potential regimes are similar to those observed in the non-degenerate case; that is, under the influence of the potential, the electron density begins localizing at the potential wells. 
\end{enumerate}

\begin{figure}[htbp]
\centering   
    \begin{subfigure}[t]{0.45\textwidth}
        \centering
        \includegraphics[width=\linewidth]{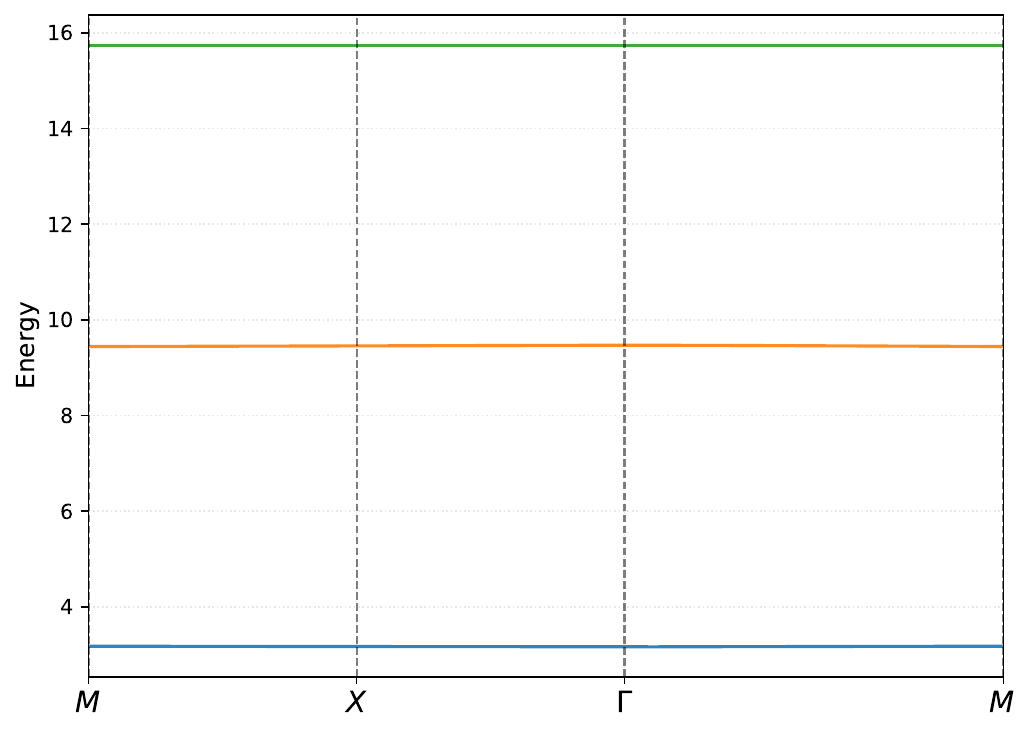}
        \caption{$V_0/|B|=0.01$}
        \label{fig: non_degenerate_bands_0_0100} 
    \end{subfigure}\hfill 
    \begin{subfigure}[t]{0.45\textwidth}
        \centering
        \includegraphics[width=\linewidth]{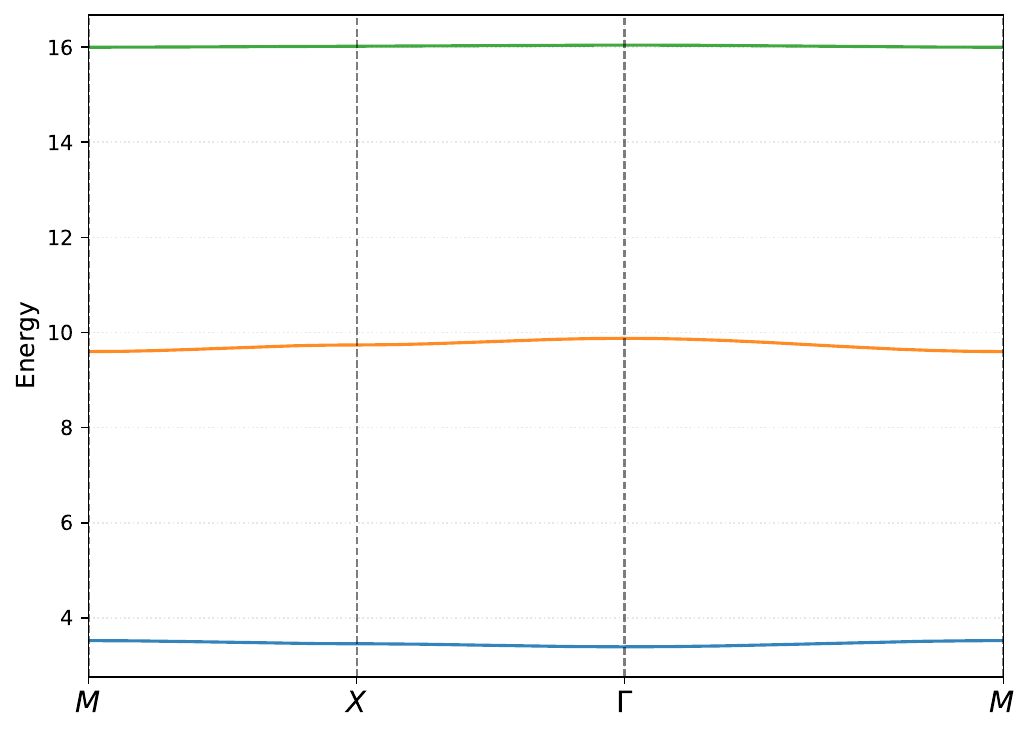}
        \caption{$V_0/|B|=0.1$}
        \label{fig: non_degenerate_bands_0_1000} 
    \end{subfigure}

    \begin{subfigure}[t]{0.48\textwidth}
        \centering
        \includegraphics[width=\linewidth]{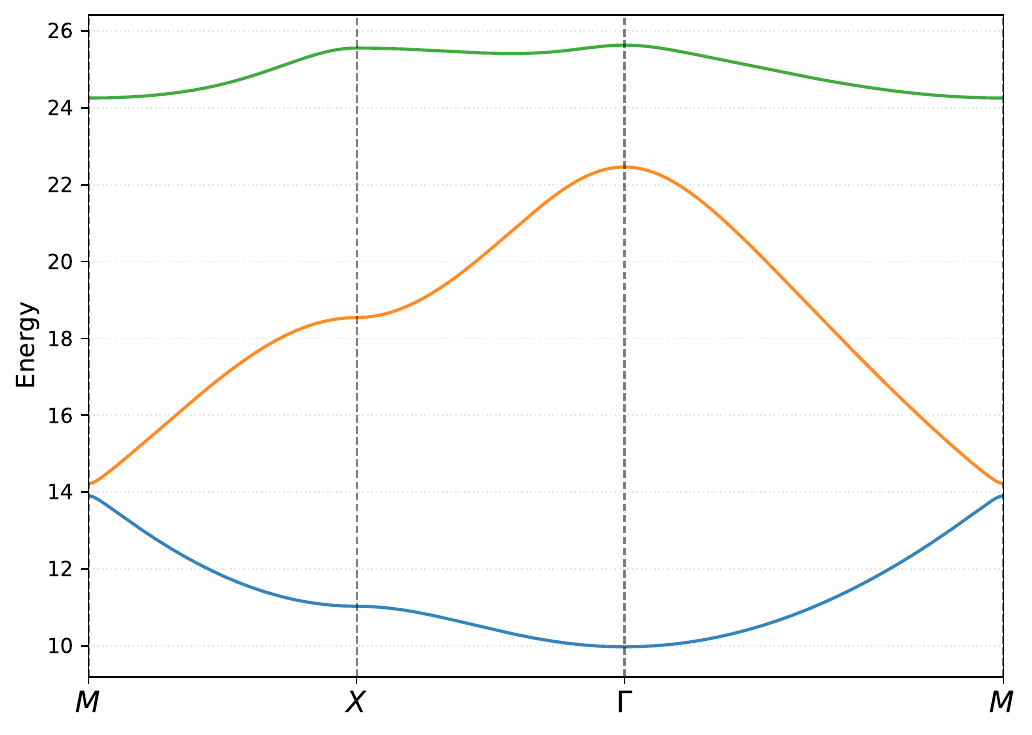}
        \caption{$V_0/|B|=3$}
        \label{fig: non_degenerate_bands_3_0000} 
    \end{subfigure}\hfill 
    \begin{subfigure}[t]{0.48\textwidth}
        \centering
        \includegraphics[width=\linewidth]{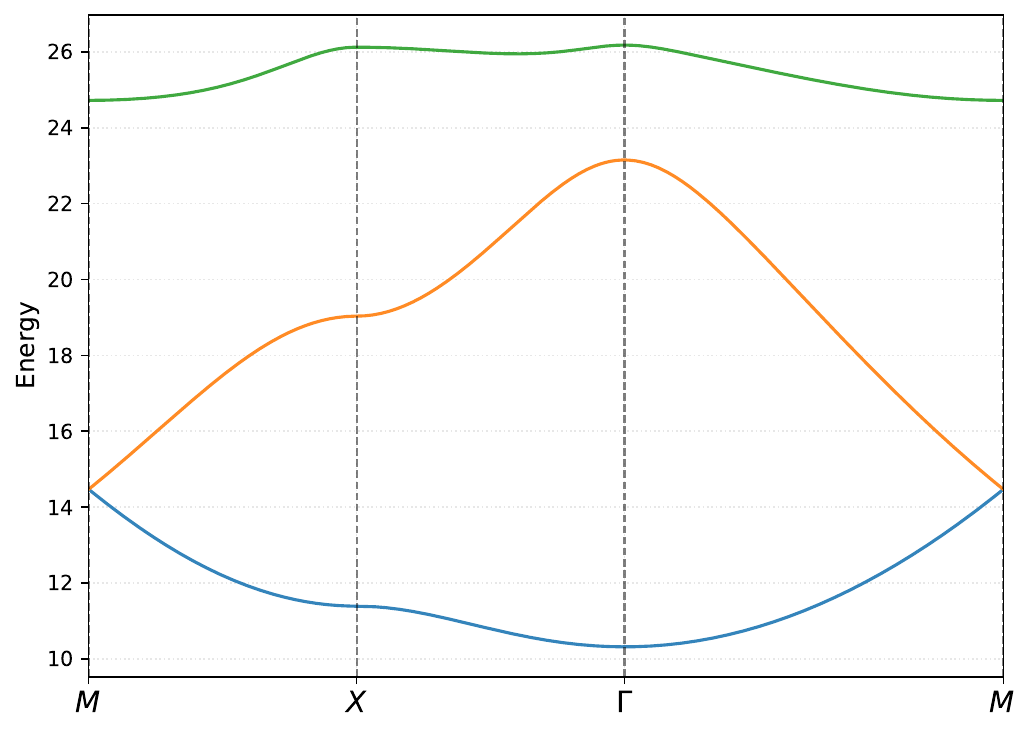}
        \caption{$V_0/|B|=3.1683$}
        \label{fig: non_degenerate_bands_3_1683} 
    \end{subfigure}
    
    \begin{subfigure}[t]{0.48\textwidth}
        \centering
        \includegraphics[width=\linewidth]{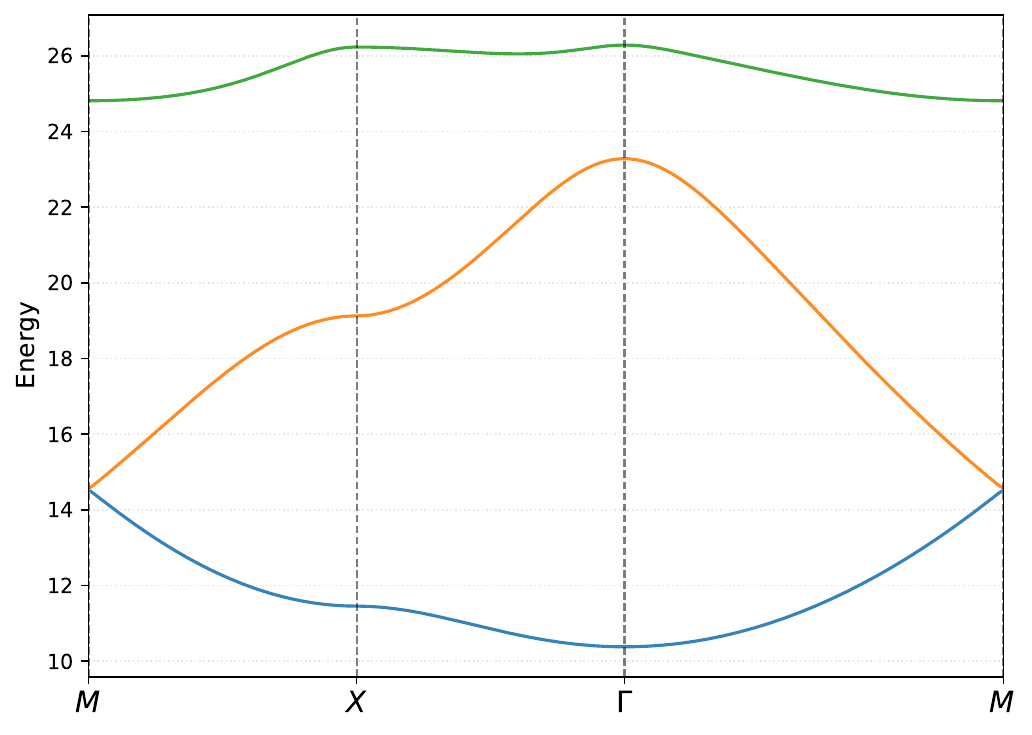}
        \caption{$V_0/|B|=3.2$}
        \label{fig: non_degenerate_bands_3_2000}
    \end{subfigure}\hfill 
    \begin{subfigure}[t]{0.48\textwidth}
        \centering
        \includegraphics[width=\linewidth]{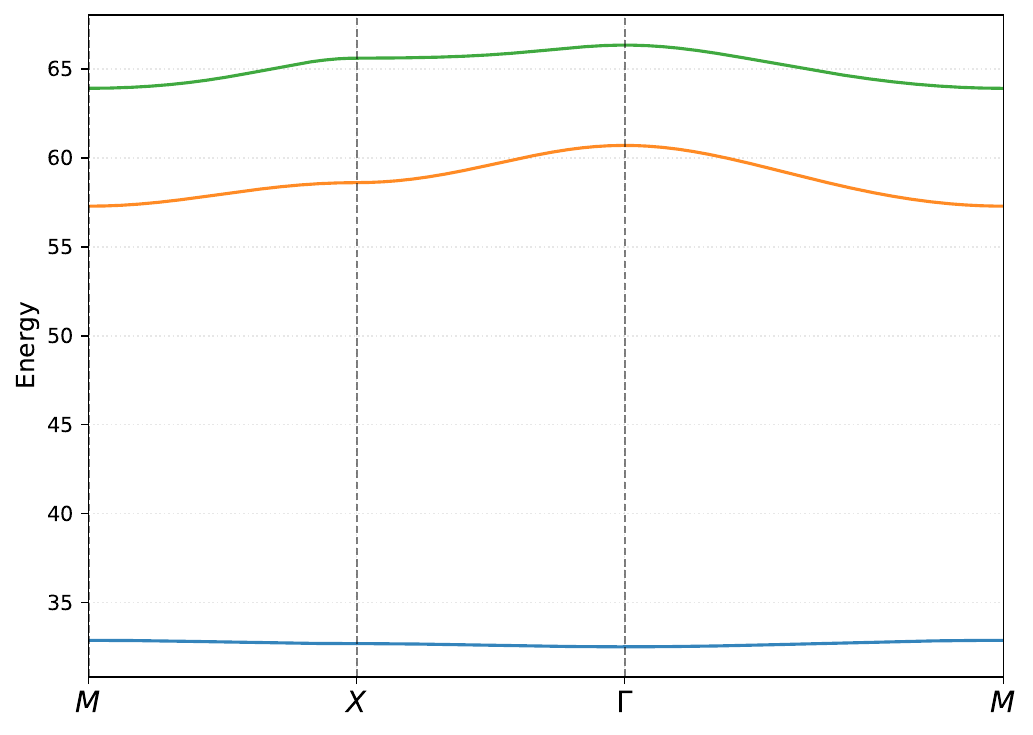}
        \caption{$V_0/|B|=20$}
        \label{fig: non_degenerate_bands_20_0000}
    \end{subfigure}

    \caption{Band structure for the non-degenerate system ($\beta=1$), generated by fixing the magnetic field strength to $B=2\pi$. Notice the band crossing between the lowest and the first excitation bands as the potential strength increases.}
    \label{fig: non_degenerate_bands}
\end{figure}

\begin{figure}[htbp]
    \centering
    \begin{subfigure}[t]{0.45\textwidth}
        \centering
        \includegraphics[width=\linewidth]{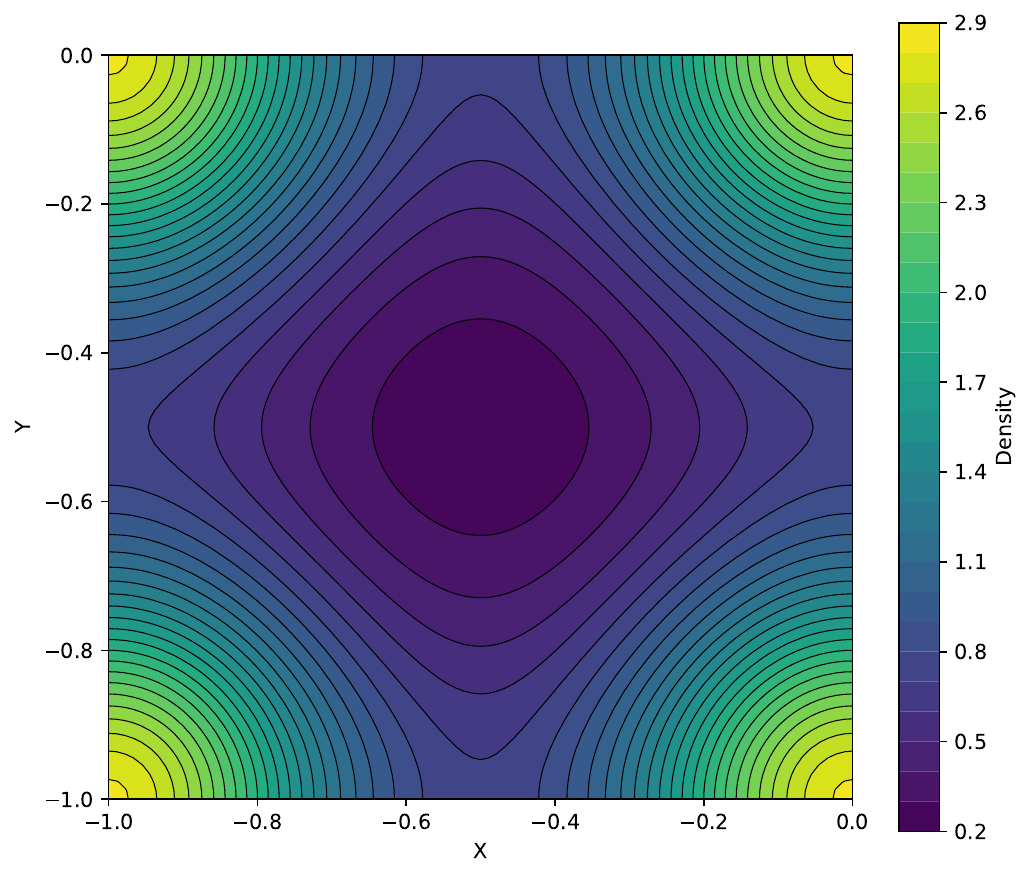}
        \caption{$V_0/|B| \approx 3.168$}
        \label{fig: non_degenerate_density_3_1683}
    \end{subfigure}\hfill 
    \begin{subfigure}[t]{0.45\textwidth}
        \centering
        \includegraphics[width=\linewidth]{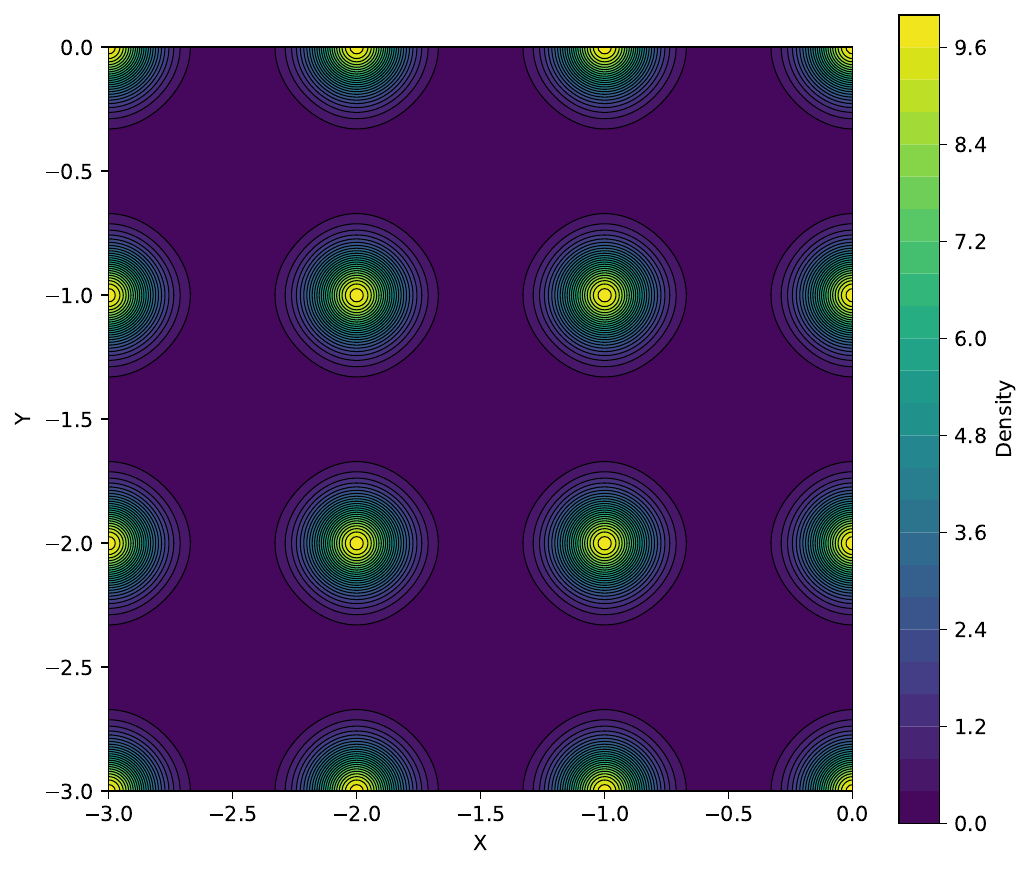}
        \caption{$V_0/|B|=20$}
        \label{fig: non_degenerate_density_20_0000}
    \end{subfigure}
    
    \caption{Ground-state density of the non-degenerate system ($\beta=1$, $B=2\pi$) under different potential strengths. Panel (a) shows the density at a moderate potential near the topological band crossing. Panel (b) shows the density evaluated over a $3 \times 3$ supercell for a large potential ratio, illustrating firm electron localization at the minima of the potential.}
    \label{fig: non_degenerate_density}
\end{figure}

\begin{table}[htbp]
\centering
\caption{Bisection method convergence log for the non-degenerate system ($\beta = 1$) showing the final 5 iterations.}
\label{tab:bisection_non_degenerate}
    \begin{tabular}{ccccccc}
        \toprule
        & \multicolumn{3}{c}{$R = \frac{V_0}{|B|}$} & \multicolumn{3}{c}{Chern Number $\mcC$} \\
        \cmidrule(lr){2-4} \cmidrule(lr){5-7}
        $i$ & Lower & Midpoint & Upper & $\mcC_l$ & $\mcC_m$ & $\mcC_u$ \\
        \midrule
        14 & 3.168311 & 3.168323 & 3.168335 & 1 & 1 & 0 \\
        15 & 3.168323 & 3.168329 & 3.168335 & 1 & 1 & 0 \\
        16 & 3.168329 & 3.168332 & 3.168335 & 1 & 1 & 0 \\
        17 & 3.168332 & 3.168333 & 3.168335 & 1 & 1 & 0 \\
        18 & 3.168333 & 3.168334 & 3.168335 & 1 & 1 & 0 \\
        \bottomrule
    \end{tabular}
\end{table}

\begin{figure}[htbp]
    \centering
    \begin{subfigure}[t]{0.30\textwidth}
        \centering
        \includegraphics[width=\linewidth]{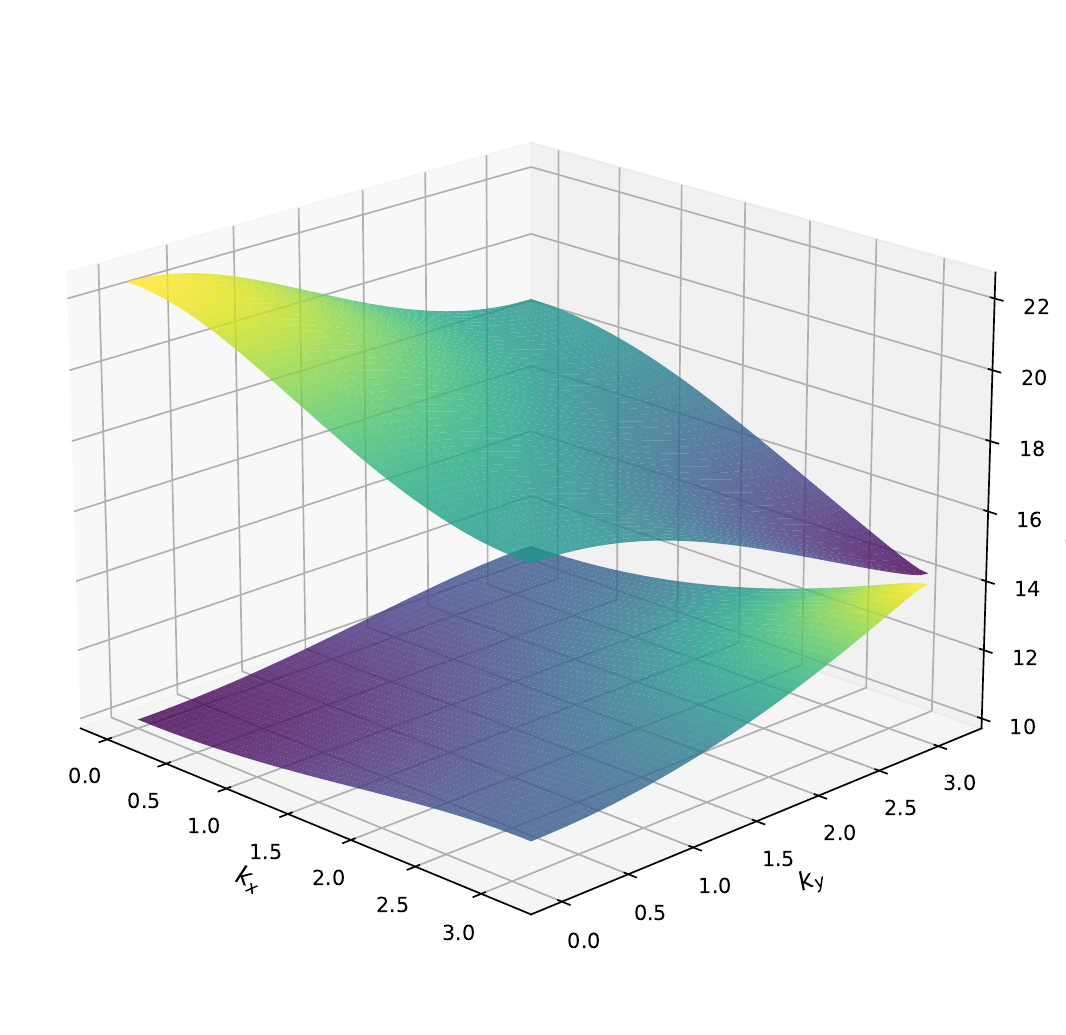}
        \caption{$V_0/|B|=3$}
        \label{fig: non_degenerate_surface_3_0000}
    \end{subfigure}\hfill 
    \begin{subfigure}[t]{0.30\textwidth}
        \centering
        \includegraphics[width=\linewidth]{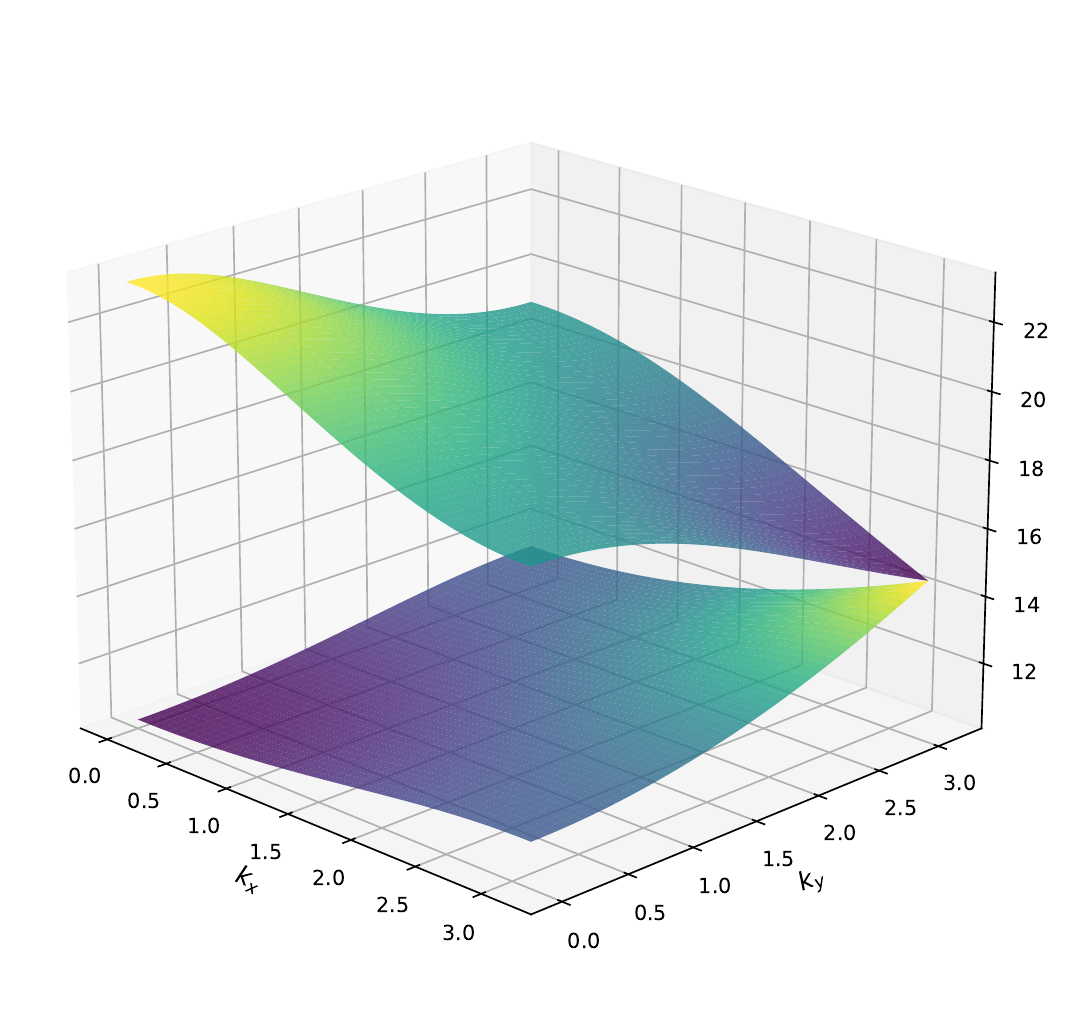}
        \caption{$V_0/|B| \approx 3.168$}
        \label{fig: non_degenerate_surface_3_1683}
    \end{subfigure}\hfill 
    \begin{subfigure}[t]{0.30\textwidth}
        \centering
        \includegraphics[width=\linewidth]{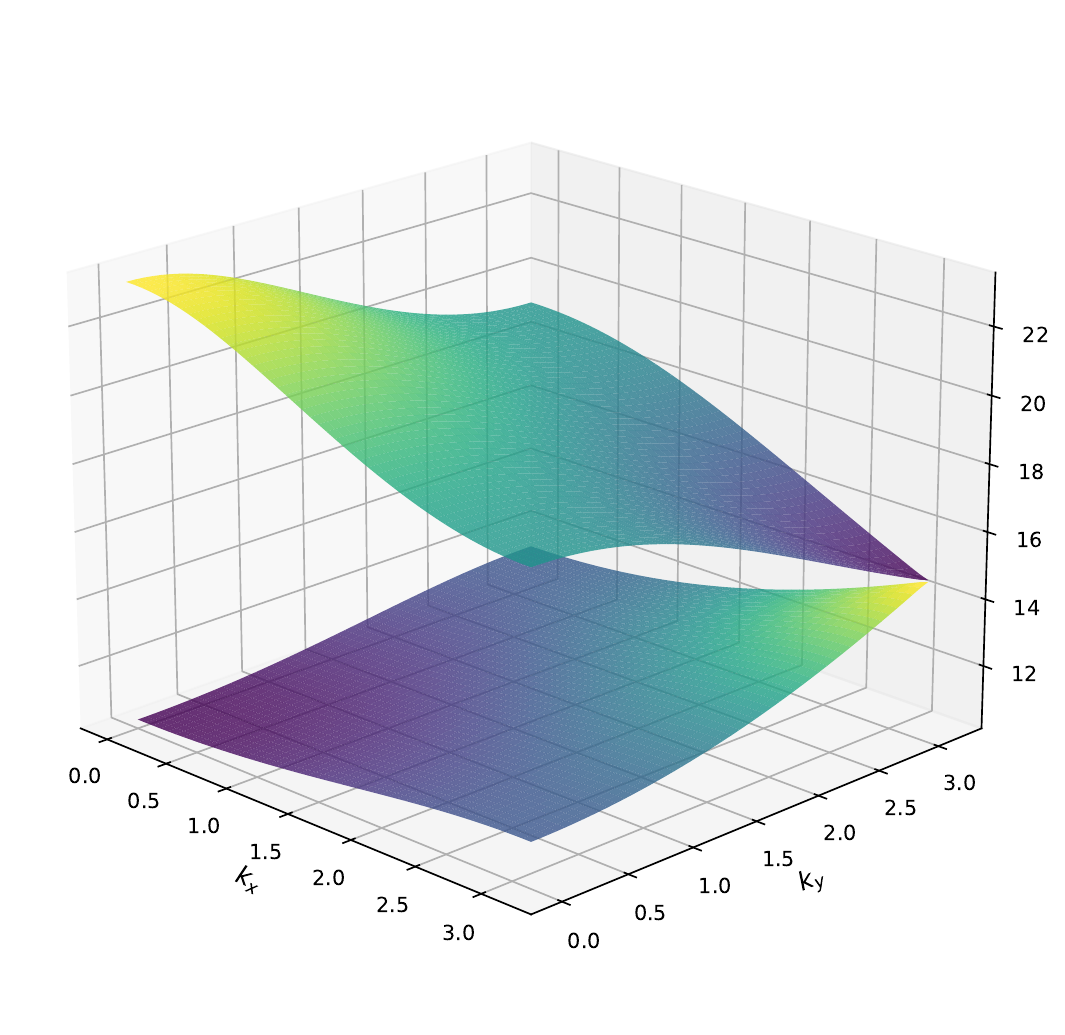}
        \caption{$V_0/|B|=3.2$}
        \label{fig: non_degenerate_surface_3_2000}
    \end{subfigure}
    
    \caption{Eigensurface of the non-degenerate system ($\beta=1$, $B=2\pi$) for three different potential strengths, illustrating the topology before and after the band crossing. Panel (a) precedes the crossing, panel (b) captures the system at the approximate crossing point, and panel (c) shows the stabilized eigensurface after the transition.}
    \label{fig: non_degenerate_surface}
\end{figure}

\begin{figure}[htbp]
\centering   
    \begin{subfigure}[t]{0.45\textwidth}
        \centering
        \includegraphics[width=\linewidth]{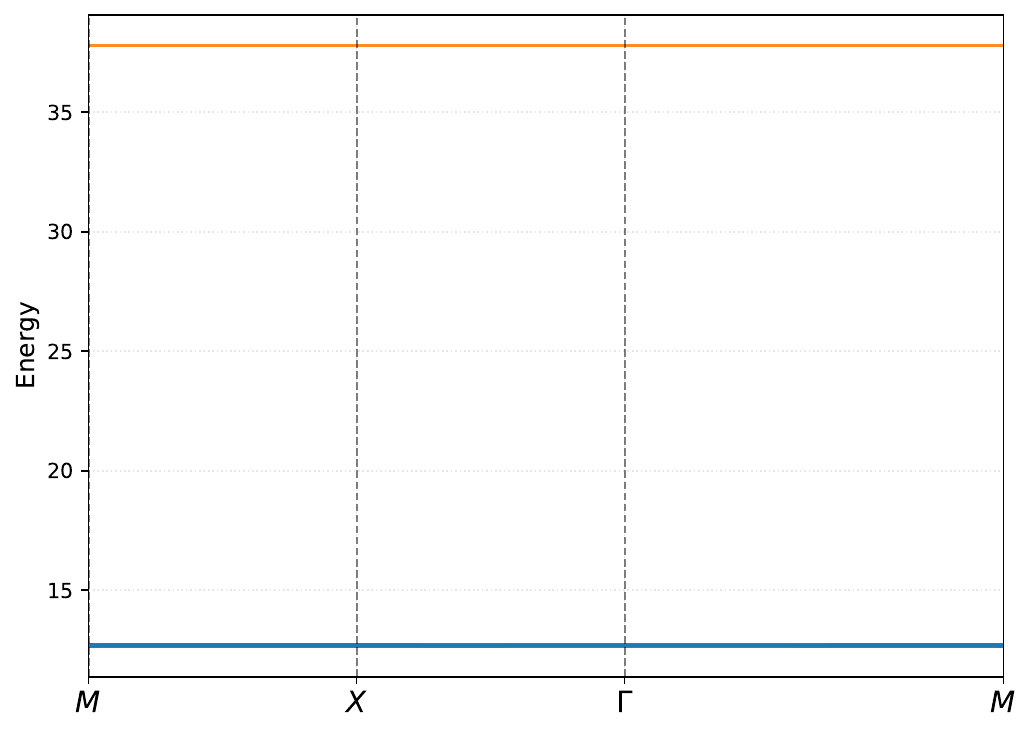}
        \caption{$V_0/|B|=0.01$}
        \label{fig: split_bands_0_0100} 
    \end{subfigure}\hfill 
    \begin{subfigure}[t]{0.45\textwidth}
        \centering
        \includegraphics[width=\linewidth]{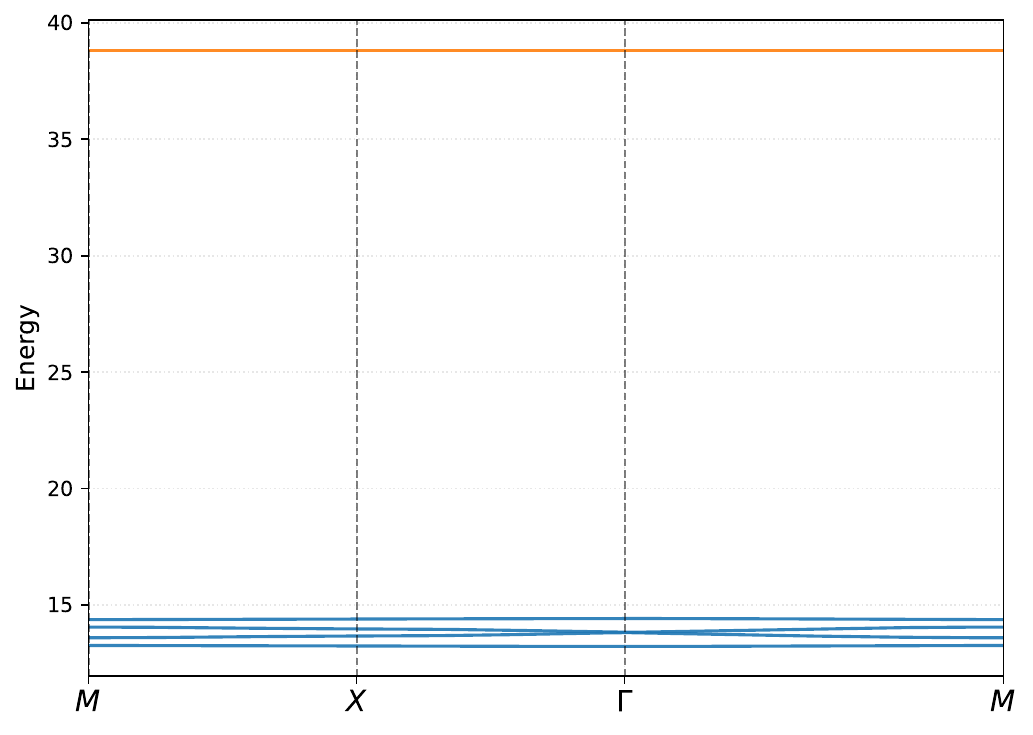}
        \caption{$V_0/|B|=0.1$}
        \label{fig: split_bands_0_1000} 
    \end{subfigure}

    \begin{subfigure}[t]{0.45\textwidth}
        \centering
        \includegraphics[width=\linewidth]{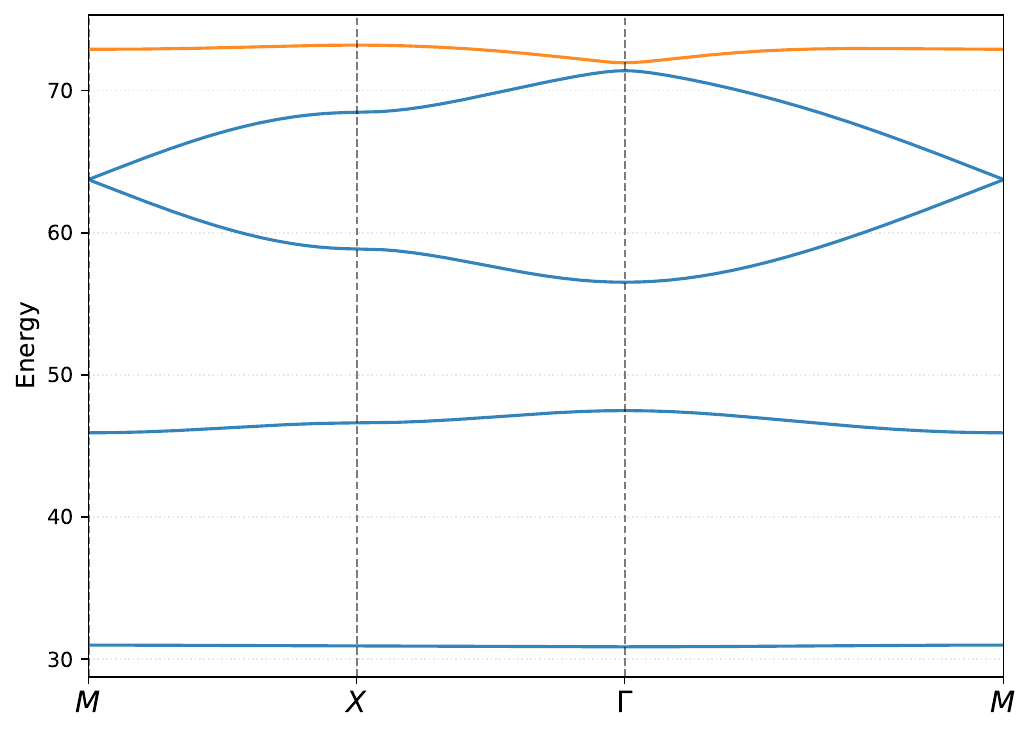}
        \caption{$V_0/|B|=3.8$}
        \label{fig: split_bands_3_8000} 
    \end{subfigure}\hfill 
    \begin{subfigure}[t]{0.45\textwidth}
        \centering
        \includegraphics[width=\linewidth]{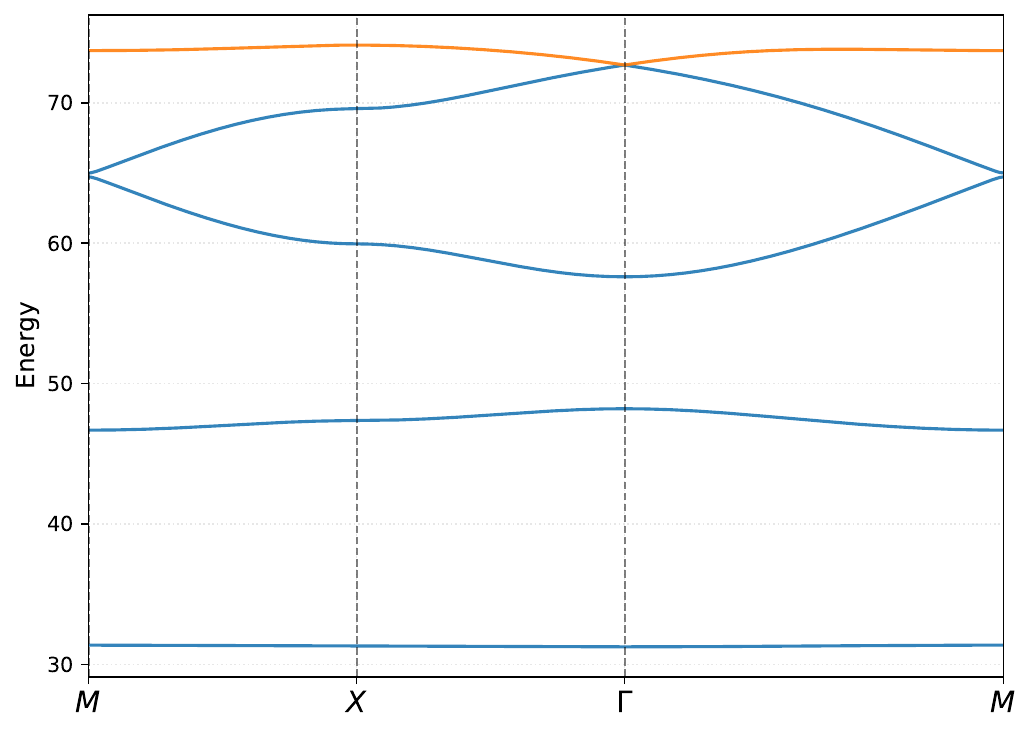}
        \caption{$V_0/|B| \approx 3.904$}
        \label{fig: split_bands_3_9036} 
    \end{subfigure}
    
    \begin{subfigure}[t]{0.45\textwidth}
        \centering
        \includegraphics[width=\linewidth]{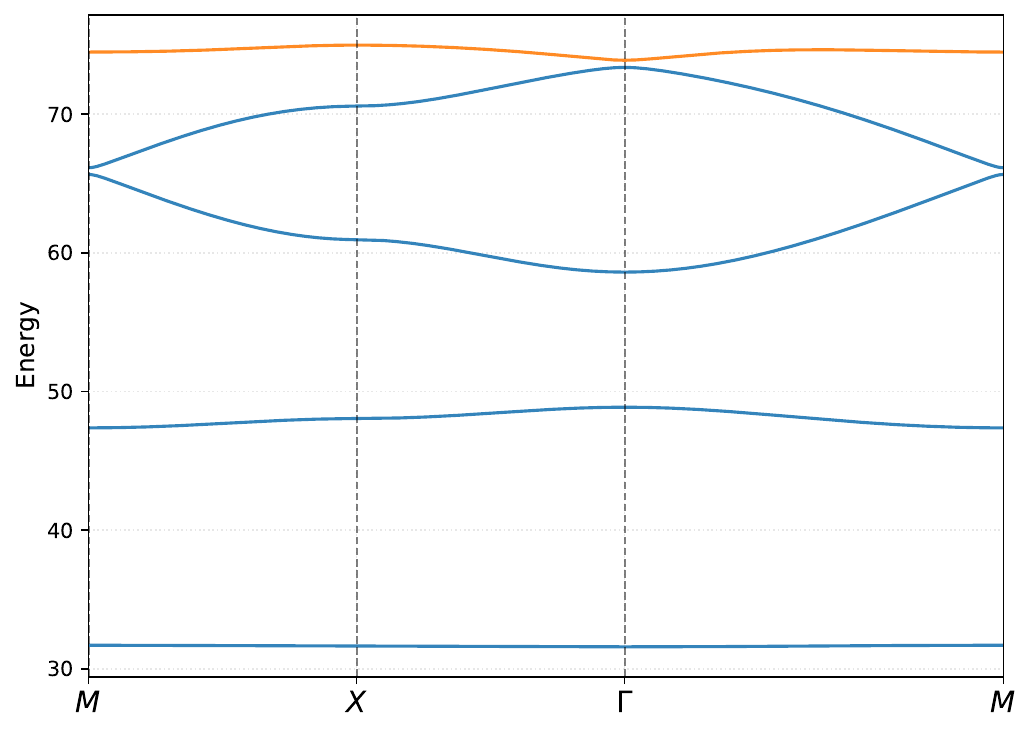}
        \caption{$V_0/|B|=4$}
        \label{fig: split_bands_4_0000}
    \end{subfigure}\hfill 
    \begin{subfigure}[t]{0.45\textwidth}
        \centering
        \includegraphics[width=\linewidth]{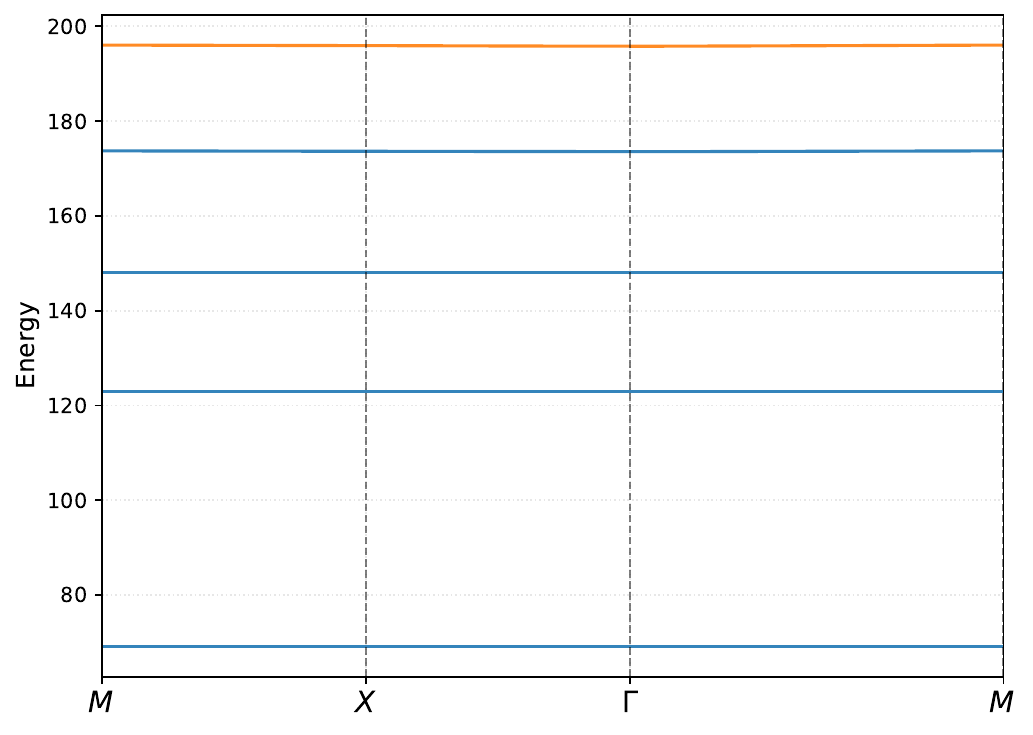}
        \caption{$V_0/|B|=20$}
        \label{fig: split_bands_20_0000}
    \end{subfigure}

    \caption{Band structure for the Split system with a degenerate basis ($\beta=4$). Notice the band splitting as the potential is introduced, and the band crossing between the third and fourth bands as the potential strength increases.}
    \label{fig: split_bands}
\end{figure}

\begin{table}[htbp]
\centering
\caption{Bisection method convergence log for the Split system ($\beta = 4$) showing the final 5 iterations.}
\label{tab:bisection_split}
    \begin{tabular}{ccccccc}
        \toprule
        & \multicolumn{3}{c}{$R = \frac{V_0}{|B|}$} & \multicolumn{3}{c}{Chern Number $\mcC$} \\
        \cmidrule(lr){2-4} \cmidrule(lr){5-7}
        $i$ & Lower & Midpoint & Upper & $\mcC_l$ & $\mcC_m$ & $\mcC_u$ \\
        \midrule
        14 & 3.903540 & 3.903552 & 3.903564 & 1 & 1 & 0 \\
        15 & 3.903552 & 3.903558 & 3.903564 & 1 & 0 & 0 \\
        16 & 3.903552 & 3.903555 & 3.903558 & 1 & 0 & 0 \\
        17 & 3.903552 & 3.903554 & 3.903555 & 1 & 0 & 0 \\
        18 & 3.903552 & 3.903553 & 3.903554 & 1 & 0 & 0 \\
        \bottomrule
    \end{tabular}
\end{table}

\begin{figure}[htbp]
    \centering
    \begin{subfigure}[t]{0.45\textwidth}
        \centering
        \includegraphics[width=\linewidth]{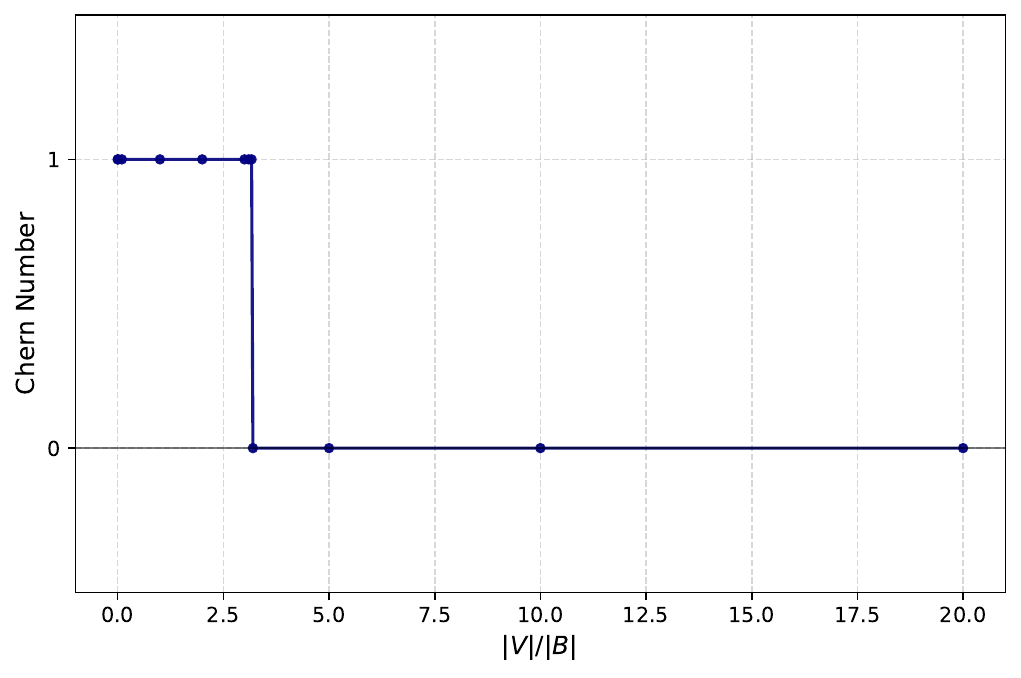}
        \caption{non-degenerate system ($\beta=1$)}
        \label{fig: non_degenerate_chern}
    \end{subfigure}\hfill 
    \begin{subfigure}[t]{0.45\textwidth}
        \centering
        \includegraphics[width=\linewidth]{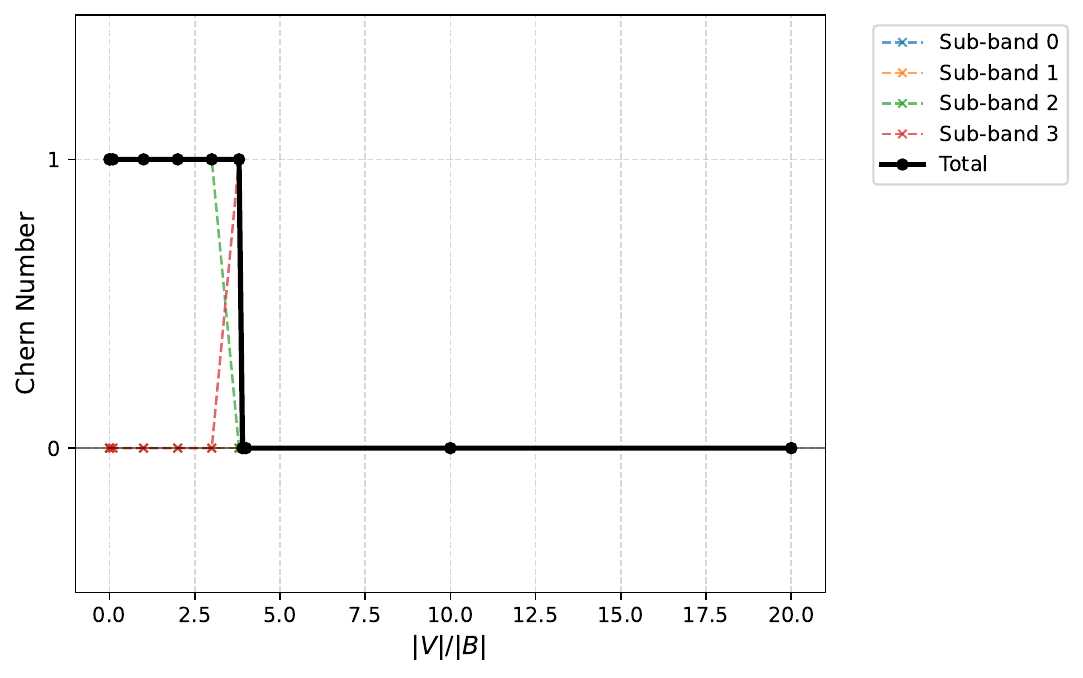}
        \caption{Split system ($\beta=4$)}
        \label{fig: split_chern}
    \end{subfigure}
    \caption{Chern number phase diagrams as a function of external potential to magnetic field strength ratio ($R$). The magnetic fields were kept fixed at $|B|=2\pi$ (left) and $|B|=8\pi$ (right).}
    \label{fig: combined_chern}
\end{figure}

\FloatBarrier

\subsection{Rational Fluxes}
We now consider the problem with a sinusoidal potential:
\begin{equation}
    V(\bfx) = \frac{V_0}{4}\left[2 - \cos(6\pi x_1) - \cos(6\pi x_2)\right].
\end{equation}
Due to the periodicity of $V$, the corresponding unit cell is $\Omega = \left[-1/3, 0\right] \times \left[-1/3, 0\right]$. For this simulation, we employed the same numerical discretization parameters detailed previously in Table~\ref{tab: simulation parameters}, fixing the magnetic field strength to $B=6\pi$. This configuration yields a fractional magnetic flux of $\beta = \frac{1}{3}$. In order to apply the methodology developed above, we employ a supercell formulation, considering $\tilde{\Omega} = [-1,0] \times [-1,0]$, where the tilde ($\tilde{\cdot}$) indicates quantities pertaining to the supercell. Consequently, the flux through the supercell becomes $\tilde{\beta} = 3$. 

Due to the periodicity of the potential over the supercell, each state inherits a degeneracy of degree $3$. In Figure~\ref{fig: supercell_density}, the resulting electron densities for the system under different potential strengths are displayed. Notice that the periodicity of the density accurately reflects the periodicity of the applied potential. Furthermore, as the potential strength increases, the densities become sharply localized at the potential wells. 

The evolution of the band structure is depicted in Figure~\ref{fig: supercell_bands}. In this system, the total Chern number is computed using the lowest band, as well as the first and second excitation bands, taking the degeneracy into account. We label the individual Chern numbers of these bands as $\mcC_1$, $\mcC_2$, and $\mcC_3$, respectively, such that the total Chern number of the system is $\mcC = \mcC_1 + \mcC_2 + \mcC_3$. All three individual Chern numbers, along with the total $\mcC$, are plotted as a function of $V_0/|B|$ in Figure~\ref{fig: supercell_chern}. 

We observe three distinct topological regimes, reflected both in the band structure (Figure~\ref{fig: supercell_bands}) and in the behavior of the individual Chern numbers. In the weak potential regime, each band carries a Chern number of $1$, resulting in a total Chern number of $\mcC = 3$. As the potential increases, a transition into an intermediate regime occurs, indicated by a sudden topological phase change where $\mcC_2$ drops from $1$ to $-2$. This is observed in Figures~\ref{fig: supercell_bands_3_0000} and~\ref{fig: supercell_bands_4_0000}, where the highest subband of the ground state suddenly touches the lowest subband of the first excited state. Following this transition, the total Chern number $\mcC$ stabilizes at $0$; however, examination of the individual subbands reveals a second crossing, as seen in Figures~\ref{fig: supercell_bands_9_0000} and~\ref{fig: supercell_bands_10_0000}. At this crossing, the Chern numbers of these bands exchange values: $\mcC_2$ jumps from $-2$ back to $1$, while $\mcC_3$ decreases from $1$ to $-2$. At this point, the system begins collapsing into the tight-binding regime, and the individual Chern numbers remain constant.

\begin{figure}[htbp]
    \centering
    \begin{subfigure}[t]{0.3\textwidth}
        \centering
        \includegraphics[width=\textwidth]{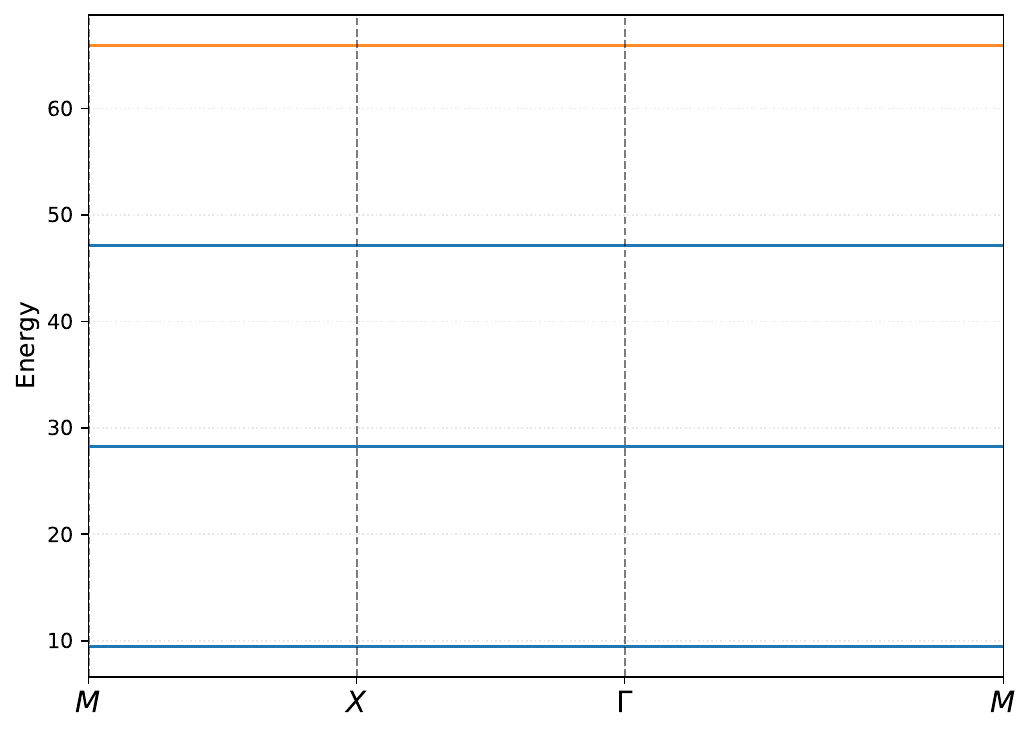}
        \caption{$V_0/|B|=0.01$}
        \label{fig: supercell_bands_0_0000}
    \end{subfigure}\hfill
    \begin{subfigure}[t]{0.3\textwidth}
        \centering
        \includegraphics[width=\textwidth]{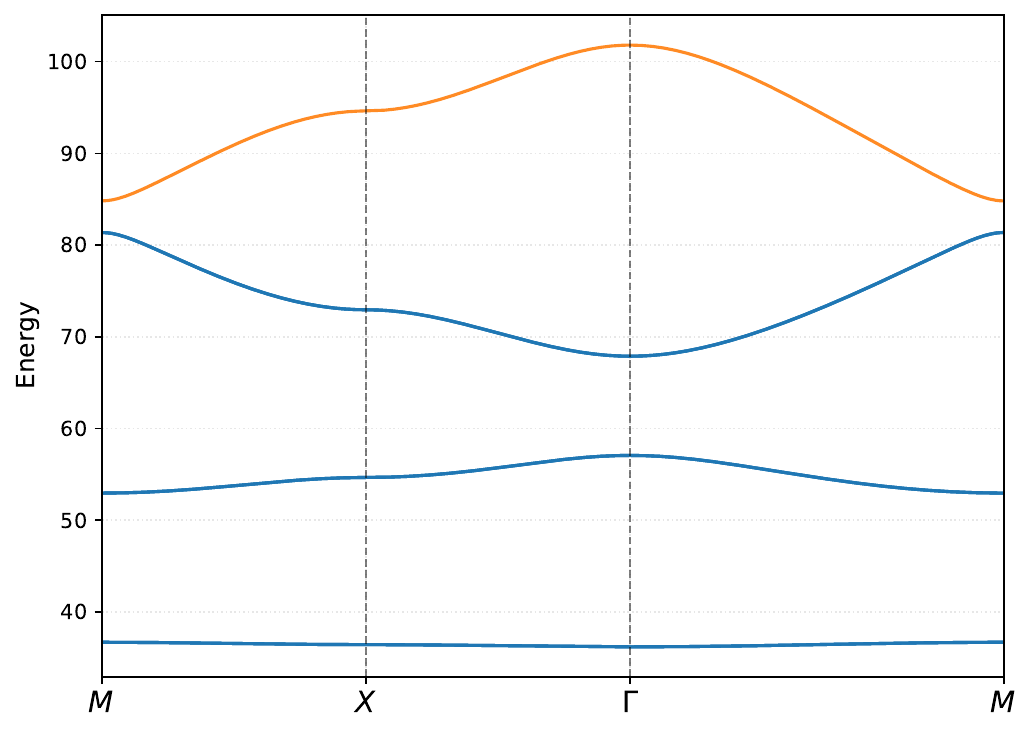}
        \caption{$V_0/|B|=3$}
        \label{fig: supercell_bands_3_0000}
    \end{subfigure}\hfill
    \begin{subfigure}[t]{0.3\textwidth}
        \centering
        \includegraphics[width=\textwidth]{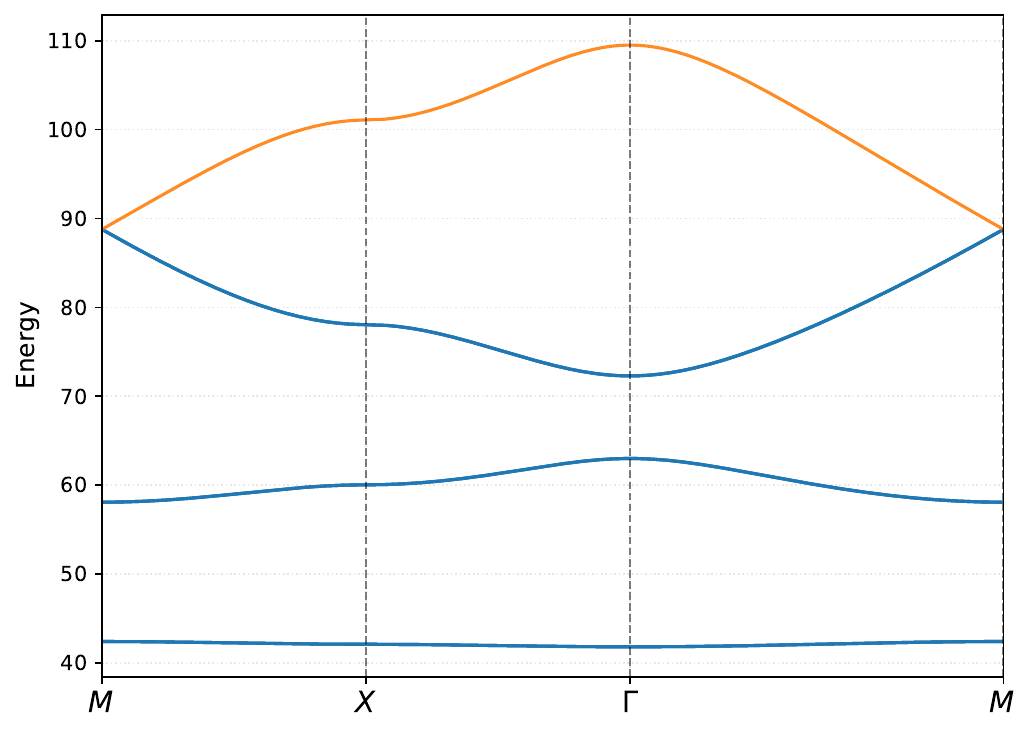}
        \caption{$V_0/|B| \approx 3.668$}
        \label{fig: supercell_bands_3_6676}
    \end{subfigure}
    
    \vspace{1em}
    
    \begin{subfigure}[t]{0.3\textwidth}
        \centering
        \includegraphics[width=\textwidth]{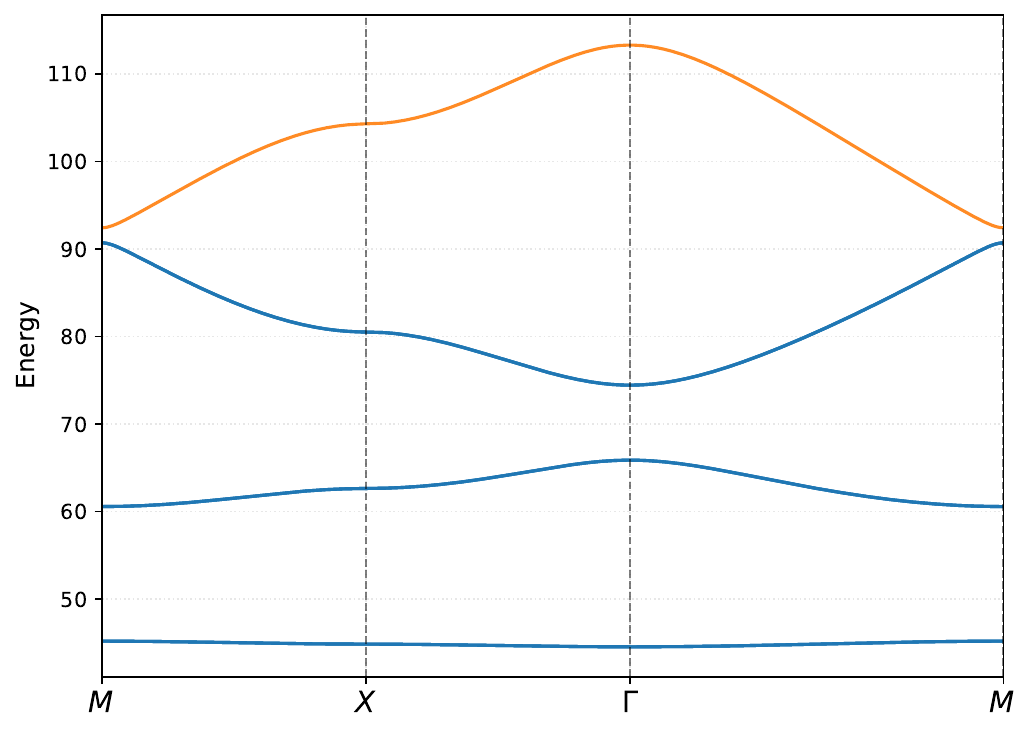}
        \caption{$V_0/|B|=4$}
        \label{fig: supercell_bands_4_0000}
    \end{subfigure}\hfill
    \begin{subfigure}[t]{0.3\textwidth}
        \centering
        \includegraphics[width=\textwidth]{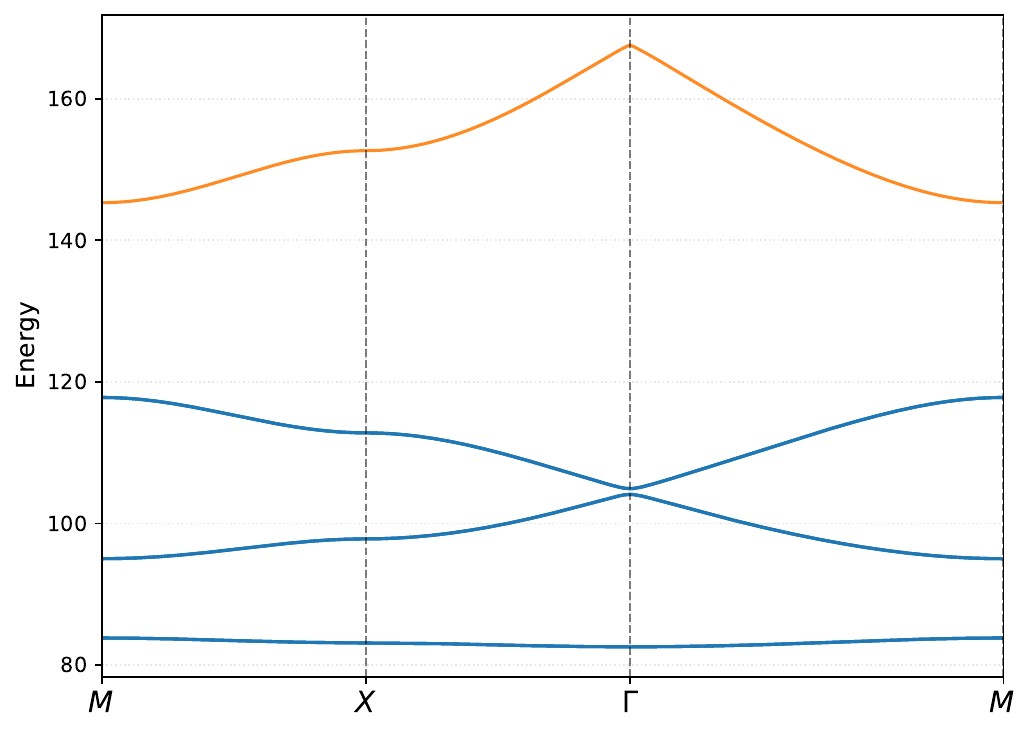}
        \caption{$V_0/|B|=9$}
        \label{fig: supercell_bands_9_0000}
    \end{subfigure}\hfill
    \begin{subfigure}[t]{0.3\textwidth}
        \centering
        \includegraphics[width=\textwidth]{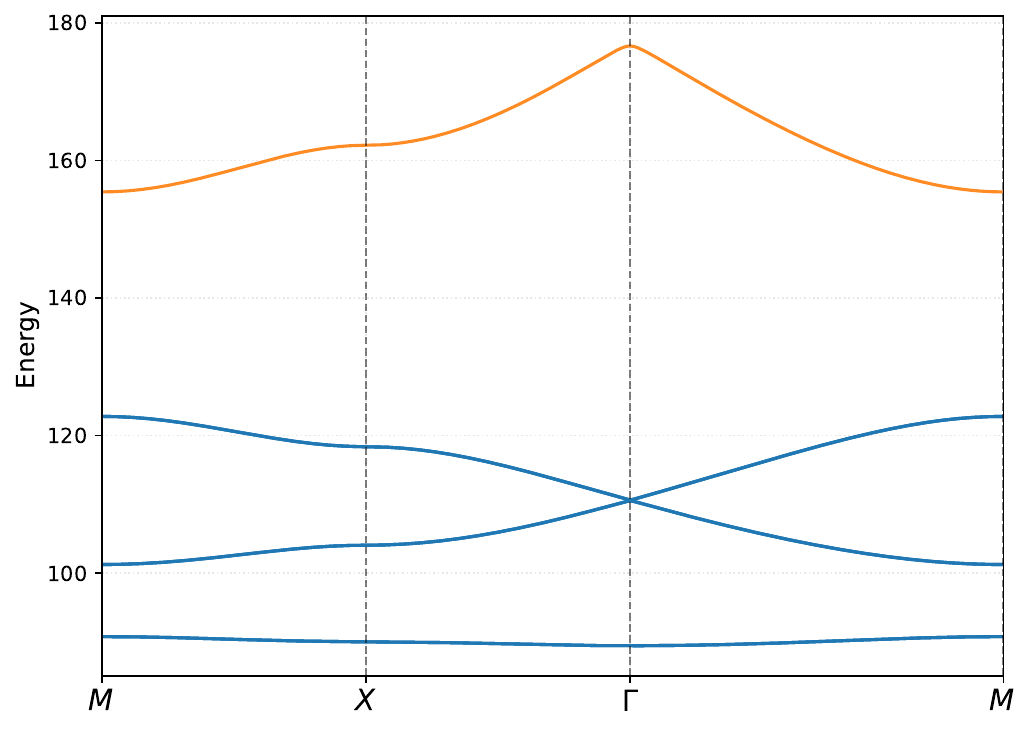}
        \caption{$V_0/|B|=10$}
        \label{fig: supercell_bands_10_0000}
    \end{subfigure}
    
    \caption{Band structure over the reciprocal supercell. Notice the collapse of the three lowest bands as the ratio between the potential strength and the magnetic field strength increases. The results were generated using a magnetic field strength of $B=6\pi$.}
    \label{fig: supercell_bands}
\end{figure}

\begin{figure}[htbp]
    \centering
    \includegraphics[width=0.55\textwidth]{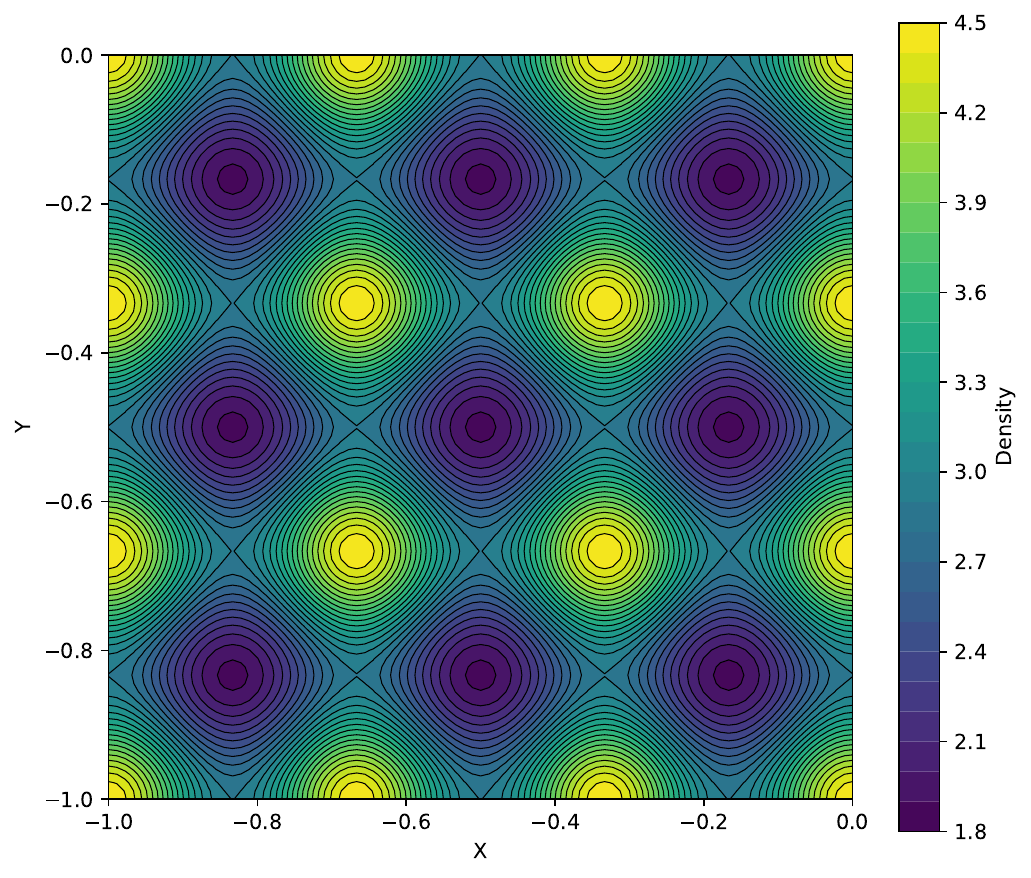}
    \caption{Ground-state density for the supercell formulation ($\beta=1/3$, yielding an effective supercell flux of $\tilde{\beta}=3$) near the first topological transition ($V_0/|B| \approx 3.668$). Notice the localization of the density at the minima of the potential.}
    \label{fig: supercell_density}
\end{figure}

\begin{figure}[htbp]
    \centering
    \includegraphics[width=0.50\textwidth]{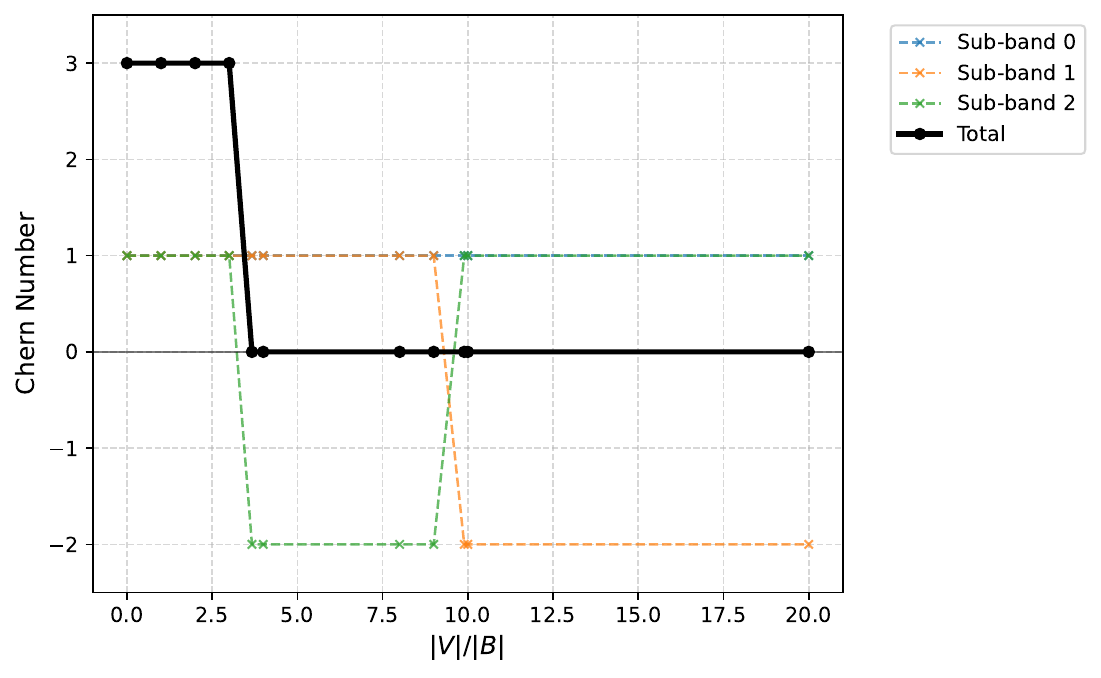}
    \caption{Phase diagram of the supercell system with a rational magnetic flux ($\beta=1/3$). The different regions are characterized by distinct topological phases indicated by their Chern numbers.}
    \label{fig: supercell_chern}
\end{figure}

%% file: Conclusions.tex
\section{Conclusions\label{sec: Conclusions}}

In this work, we have proposed a spectral method for the approximation of eigenfunctions corresponding to a Landau Hamiltonian subject to a periodic potential, using the eigenfunctions of the Landau operator as a basis. By carefully manipulating the expressions, we obtained a computationally feasible method. We also took advantage of the periodicity of the potential to expand it in a Fourier series, which allows us to employ the Fast Fourier Transform for the computations. We then proceeded to run benchmark simulations under a well-known optical lattice potential and explored the behavior of the bands as a function of the potential strength. 

The proposed method allows us to explore the behavior of a Landau system subject to a periodic potential in a computationally efficient way, and can be used to explore the behavior of the system under different potentials and to compute topological invariants such as the Chern number. In particular, we implemented the Fukui-Hatsugai-Suzuki (FHS) method to efficiently compute the Chern number of the lowest band for a wide range of potential strengths, recovering results consistent with the literature. The computation of the Chern number allowed us to assert the existence of band crossings. Furthermore, it is possible, by applying a simple bisection method, to narrow the range where said crossing occurs.

%% file: Data_availability_acknowledgements.tex
\section*{Data availability}
All the data and the code used to generate the results presented in this paper are available at \url{https://doi.org/10.18419/DARUS-6415}.

\section*{Acknowledgments}
RL and BS acknowledge support from the Deutsche Forschungsgemeinschaft (DFG, German Research Foundation) under project 516782692. We thank the Deutsche Forschungsgemeinschaft (DFG, German Research Foundation) for supporting this work by funding - EXC2075 – 390740016 under Germany’s
Excellence Strategy. We acknowledge the support by the Stuttgart Center for Simulation Science
(SimTech). The authors are grateful to the anonymous reviewers for their careful reading and valuable improvements of the paper.

The authors acknowledge the use of generative AI tools to improve the clarity and readability of both the manuscript the the corresponding code. The authors take full responsibility for the content of the publication.